\pgfplotsset{compat=1.3}
\newcommand{\RR}{\ensuremath{\mathbb{R}}}
\newcommand{\CC}{\ensuremath{\mathbb{C}}}
\newcommand{\NN}{\ensuremath{\mathbb{N}}}
\DeclareMathOperator{\sgn}{sgn}
\DeclareMathOperator{\tr}{tr}
\newcommand{\re}{\operatorname{Re}}
\newcommand{\im}{\operatorname{Im}}
\renewcommand{\d}{\ensuremath{\mathrm{d}}}
\newcommand{\e}{\ensuremath{\mathrm{e}}}
\newcommand{\D}{\ensuremath{\mathcal{D}}}\clearpage{}
\newcommand*\zeqref[1]{Eq.~\eqref{#1}}
\newcommand{\pvint}{\mathop{\mathrm PV}\!\int}
\begin{document}

\begin{center}{\Large \textbf{
Instabilities of quantum critical metals in the limit $N_f\rightarrow0$
}}\end{center}

\begin{center}
P. S\"aterskog\textsuperscript{1,2*}
\end{center}

\begin{center}
  {\bf 1} Blekingegatan 27, 118 56 Stockholm, Sweden\\
  {\bf 2} Nordita, KTH Royal Institute of Technology and Stockholm University, \\
Roslagstullsbacken 23, SE-106 91 Stockholm, Sweden\\
* petter.saterskog@gmail.com
\end{center}

\begin{center}
January 27, 2021
\end{center}

\section*{Abstract}
{\bf 
We study a model in 2+1 dimensions composed of a Fermi surface of $N_f$ flavors of fermions coupled to scalar fluctuations near quantum critical points (QCPs).
The $N_f\rightarrow0$ limit allows us to non-perturbatively calculate the long-range behavior of fermion correlation functions. We use this to calculate charge, spin and pair susceptibilities near different QCPs at zero and finite temperatures, with zero and finite order parameter gaps. While fluctuations smear out the fermionic quasiparticles, we find QCPs where the overall effect of fluctuations leads to enhanced pairing. We also find QCPs where the fluctuations induce spin and charge density wave instabilities for a finite interval of order parameter fluctuation gaps at $T=0$. We restore a subset of the diagrams suppressed in the $N_f\rightarrow0$ limit, all diagrams with internal fermion loops with at most 2 vertices, and find that this does not change the long-range behavior of correlators except right at the QCPs.
}

\vspace{10pt}
\noindent\rule{\textwidth}{1pt}
\tableofcontents\thispagestyle{fancy}
\noindent\rule{\textwidth}{1pt}
\vspace{10pt}

\section{Introduction}
Interacting fermions at finite density form a rich physical system that can be used to describe many condensed matter phenomena. Such systems are generally governed by the Fermi liquid IR fixed point but are susceptible to several symmetry breaking instabilities resulting in states with superconductivity (SC), ferromagnetism (FM), a Fermi surface breaking rotational symmetry, or a state with charge or spin density waves (CDW, SDW) breaking translational symmetry.

Transitions to these states can sometimes be tuned to zero temperature in condensed matter systems by changing e.g.~pressure or the concentration of different dopants. When such a transition is second order, it is associated with strong fluctuations of the order parameter field near the zero temperature quantum critical point (QCP). The interaction between these order parameter fluctuations and the fermions is relevant in two dimensions and may thus alter the Fermi liquid IR fixed point. More dramatically, the interaction with near-critical fluctuations may themselves induce an instability to a symmetry broken state. This makes interacting fermions in two dimensions near a QCP an even richer system to study, we refer to them as quantum critical metals. The modified fixed points are called non-Fermi liquids (NFLs) or strange metals. The sharp quasiparticles of the Fermi liquid are destroyed but the Fermi surface remains, they are still metallic but their scaling laws are modified.

Many experimental systems of interacting fermions that do not fit into a Fermi liquid description are in fact (quasi) two-dimensional and close to QCPs.
Examples are cuprate \cite{lawler2010intra,daou2010broken} and iron pnictide \cite{kuo2016ubiquitous,Chu710,PhysRevLett.112.047001} superconductors showing NFL physics in the normal phase and critical temperatures higher than what can be explained within Fermi-liquid theory. Critical fluctuations can lead to NFL physics but the destruction of the quasiparticles is thought to limit pairing. At the same time, certain critical fluctuations can work as a pairing glue that enhances superconductivity. This makes it important to study quantum critical metals to understand which of these competing effects dominates and whether the vicinity of the QCPs can explain the enhanced superconductivity.

A fermion-boson model \cite{PhysRevB.14.1165,PhysRevB.48.7183,PhysRevB.50.14048} is typically employed to study quantum critical metals. The fermions represent low-energy excitations at a Fermi surface and the boson takes the role of the critical fluctuations of the order parameter. The symmetry broken at the QCP is reflected in the symmetry of the boson. There is an important distinction between transitions breaking translational symmetry where the order parameter fluctuations have a finite momentum $Q$ and those that do not break translational symmetry and the order parameter fluctuations are centered around $Q=0$. We only consider the $Q=0$ case in this work. As mentioned, these are strongly coupled systems in two dimensions and the fermion-boson model is thus not amenable to perturbation theory.
As a consequence, several approximation methods have been employed. A common approach has been to extend the theory to get a new expansion parameter that is taken to a limit where we can do calculations. The resulting system is not the one we are ultimately interested in, but the hope is that some of the interesting physics remains. One example of this is to not study the model in 2 dimensions, but in $d_c-\epsilon$ dimensions \cite{PhysRevB.90.165144, PhysRevB.92.035141, PhysRevB.92.045118, mandal2016uv, mandal2016superconducting}. $\epsilon$ is treated as a small perturbation away from the upper critical dimension $d_c$ where the theory can be studied perturbatively. 

Another approach is to change the field content of the model. This can be done in several different ways. Instead of considering just a single fermionic field we may consider $N_f$ flavors of fermions all coupling to the same bosonic field with a rescaled coupling constant $g=\lambda/\sqrt{N_f}$. This extended model has been studied in the limit of large $N_f$ \cite{POLCHINSKI1994617,metlitski1}. It was later found that the scaling laws presented in these works do not survive when adding higher loop corrections \cite{PhysRevB.80.165102, PhysRevB.92.041112}.

A related approach is the matrix large-$N$ limit in which the boson is an $N\times N$ component matrix transforming in the adjoint representation of a global $SU(N)$ symmetry group and the fermion an $N$ component vector transforming in the fundamental. The coupling is scaled as $g=\lambda/\sqrt{N}$ and the theory is studied in the limit of large $N$. All fermion loops are suppressed in this limit as well as all non-planar diagrams. This means that the boson receives no corrections from the fermion and only a subset of the boson corrections to the fermion are kept. Polynomially many diagrams contribute at each order as opposed to the factorially many in the full theory. Both non-Fermi liquid physics \cite{PhysRevB.82.045121, PhysRevB.89.165114, PhysRevB.88.115116, PhysRevLett.123.096402} and the superconducting instability \cite{PhysRevB.88.125116, PhysRevB.91.115111, PhysRevB.92.205104, aguilera2020thermal} of quantum critical metals have been studied in this limit.  
 
Finally we arrive at the small $N_f$ limit which is the approach considered in this paper. This is set up similar to the large $N_f$ limit, but we do not rescale the coupling constant and we bring $N_f$ to 0 instead of infinity. It may seem awkward to set $N_f$ to zero, in the large $N_f$ and $N$ cases above we have a sequence of physical theories with integer numbers of fields that approach the limit under study. Here there is no such sequence of physical theories but in practice there is no big difference. All diagrams come with an order of $N_f$ and we keep the lowest $N_f$ order instead of the highest in $N$. It turns out that the $N_f\rightarrow0$ limit keeps all of the diagrams kept in the matrix large-$N$ limit and it additionally keeps the crossed diagrams and thus contains factorially many diagrams at each order. Only the fermionic loops are suppressed by taking $N_f\rightarrow0$. This large set of included diagrams is unique to the small $N_f$ limit among the analytical approaches. This may allow the small $N_f$ limit to uncover non-perturbative phenomena that are otherwise not found.

Intuitively, the small $N_f$ limit can be understood as the fermions living in a background of a fluctuating bosonic field $\phi$ that gives the fermions corrections at all orders of the coupling constant, but all the corrections from the fermions onto the field $\phi$ are turned off and $\phi$ behaves as if it were free.

The momentum space retarded two-point function was calculated in the small $N_f$ limit in \cite{paper1}. Surprisingly, it was possible to obtain an (almost) closed form expression, and it showed that the fermion dispersion becomes non-monotonic due to corrections from the critical fluctuations. This was then extended in \cite{paper3} where a framework was developed to calculate general fermion $n$-point correlation functions in the small $N_f$ limit of a fermion-boson model.

In addition to keeping a superset of all diagrams of the matrix large $N$ limit, the small $N_f$ limit allows us to calculate explicit correlation functions, albeit in real space and for distances longer than $1/k_F$. This provides a useful complement to the results that have been obtained using other limits that are mostly in the form of scaling laws and renormalization group flows.

The small $N_f$ calculations are done in a long distance limit where we consider correlators between local operators separated far compared to the inverse Fermi momentum. This low-energy limit does not commute with either the matrix large-$N$ limit nor the small $N_f$ limit. An important effect that is missed by taking these limits before the low-energy limit is called Landau-damping. The boson gets dressed by fermion bubbles that change the IR scaling of the boson. There are earlier works in the small $N_f$ limit where the authors incorporated these effects from the start by using an explicitly Landau-damped boson when calculating the real space fermion two-point function \cite{Kopietz1996, PhysRevLett.71.2118,PhysRevB.50.14048,PhysRevLett.73.472,PhysRevB.79.115129}. This takes into account some of the low-energy effects missed in taking the $N_f\rightarrow0$ limit first. However, this was not done in a systematic way. Landau-damping effects were included systematically (with further constraints on the order of other limits involved in the definition of the theory) by considering the particular limit $N_f\rightarrow0$, $k_F\rightarrow\infty$, $N_fk_F=\text{constant}$ in \cite{paper2}.
This limit does not go as far as taking $k_F\rightarrow\infty$ first, but it gives a regime where we get Landau damping effects, while still allowing us to use the methods of the strict small $N_f$ limit to calculate correlation functions.

The intuition is now slightly changed, the bath the fermions live in receives some corrections from the fermions such that its IR scaling is changed. However, the corrections are only to the boson two-point function and the bosonic fluctuations remain Gaussian. The momentum space fermion two-point function was calculated in this limit in \cite{paper2} and it was found that the non-monotonicity of the fermion dispersion disappeared for a non-zero $N_fk_F$.

The above works in the small $N_f$ limit studied non-Fermi liquid physics of quantum critical metals in the normal phase. The cuprates and iron pnictides additionally show enhanced superconductivity compared to what would be expected from Fermi liquid theory and this has not yet been studied in the small $N_f$ limit. In this paper we explore how critical fluctuations can lead to instabilities of finite density fermions in the small $N_f$ limit. We do this by using the framework developed in \cite{paper3} to calculate spin, charge and pair correlation functions. As opposed to the fermion self-energy studied in earlier works, these higher order $n$-point functions are sensitive to the symmetry of the fermion-boson coupling so we consider different types of QCPs and additionally fermions coupled to a $U(1)$ gauge field which can be described by the same fermion-boson model. Experimental systems are always at a finite temperature, and it has been pointed out that non-Fermi liquids may behave quite differently at finite temperatures compared to at $T=0$ \cite{PhysRevB.96.144508, PhysRevB.97.054502, PhysRevX.10.031053, aguilera2020thermal}. Because of this, we go beyond earlier small $N_f$ works by working at both $T=0$ and finite $T$. Additionally, we depart from the critical point and also consider gapped order parameter fluctuations to obtain a full phase diagram on the disordered side of the QCPs.

We give a brief summary of the main results here.
We show that the small-$N_f$, long-distance, real space correlation functions can be written as the free correlators, multiplied by the exponential of a linear combination of two functions we call $h^+(\tau, x)$ and $h^-(\tau, x)$. The former of these functions captures processes with fermions interacting with fermions within the same Fermi surface patch and is responsible for quasiparticle smearing. The latter captures processes where fermions interact with fermions in an antipodal patch and is responsible for attraction and repulsion between antipodal fermions. The coefficients of the linear combination of $h^\pm(\tau,x)$ in the exponents are given by the coupling function in the different Fermi surface patches. $h^+(\tau, x)$ shows up in fermion self-energy corrections whereas both $h^+(\tau, x)$ and $h^-(\tau, x)$ show up in charge, spin and pair correlators. The $h^\pm(\tau,x)$ functions are both obtained from an integral over the boson propagator and we find a general inequality, $|h^-(\tau, x)|\ge|h^+(\tau, x)|$, meaning that the effects of fermions interacting between antipodal patches are stronger than the intra-patch interactions, except when the inequality is saturated. $h^-(\tau, x)$ grows unboundedly in $x$ and depending on the signs of the couplings, this can result in diverging susceptibilities. The above inequality is saturated at finite temperatures, resulting in finite susceptibilities, but we find that the charge/spin susceptibilities diverge at zero temperature up to a critical order parameter gap, for certain couplings. We interpret this as an instability towards charge/spin density waves at zero temperature, with a new induced QCP at the critical gap. We also find an enhanced/cured divergence of the pair-susceptibility as temperature is lowered, again depending on the signs of couplings. We change the boson propagator to include Landau damping as was done in \cite{paper2}, and find that the long-distance behavior of correlation functions is unchanged, except right at the QCP.

The paper is organized as follows. In Section \ref{sec:setup} we present the model we have studied and the particular limit this is studied in. We start out with a very general model to first explore the possible phases without restrictions. In this section we also extend the framework in \cite{paper3} to account for momentum dependent couplings and finite temperature. In Section \ref{sec:instab} we calculate pair, charge and spin correlation functions of the general model in search of unstable modes. We consider the effect of Landau-damping corrections by redoing this in the double limit of \cite{paper2} in Section \ref{sec:LD}. In Section \ref{sec:specific} we apply these results to specific physical systems that can be described by the model that we have been exploring. We consider charge and spin nematic transitions, ferromagnetic transition, circulating current transition and coupling to an emergent $U(1)$ gauge field, in different subsections. We also consider the case of fermions coupled to multiple near-critical fluctuations. Each of these subsections are ended with a discussion where we compare our results to earlier findings on similar models. Finally in Section \ref{sec:conclusion} we summarize our work and pose some open questions for the future.

\section{Setup and calculation of fermion $n$-point functions in the $N_f\rightarrow0$ limit\label{sec:setup}}
Similarly to \cite{metlitski1,PhysRevB.91.115111} we consider a general theory that has applications to several systems composed of a Fermi surface coupled to $Q=0$ gapless bosonic excitations. The theory contains free parameters whose values depend on the particular application we consider. For now, we keep the values of these parameters unspecified. In Section \ref{sec:specific} we consider physical realizations of this model with more concrete parameters.

We consider a Fermi surface of $N_f$ flavors of spin-$1/2$ fermions $\psi_{i,\sigma}$ and order parameter fluctuation fields $\phi_a$:
\begin{align}
S=\int\!\mathrm{d}\tau\mathrm{d}^2x\left[\psi_{i,\sigma}^{\dagger}\left(\partial_{\tau}+\epsilon(i\nabla)-\mu\right)\psi_{i,\sigma}+\frac{1}{2}\left(\partial_{\tau}\phi_a\right)^{2}+\frac{c_a^2}{2}\left(\nabla\phi_a\right)^{2}+\frac{r_a}{2}\phi_a^2-\phi_a(x) O_a(x)\right]\label{eq:action}.
\end{align}
$i$ is the fermion flavor index and takes values $i=1, ...,N_f$, $\sigma$ is the fermion spin $\sigma=\uparrow, \downarrow$. The fermions transform in the fundamental representation of a global $U(N_f)$ flavor symmetry group. $a=1, ...,N_b$ enumerates the order parameters/order parameter components. $r_a$ is a tuning parameter across the QCP where $\langle\phi_a\rangle\neq0$ for $r_a<0$ and $\langle\phi_a\rangle=0$ for $r_a>0$.
For simplicity, we limit ourselves to the case of a circular Fermi surface $\epsilon(\mathbf{k})=k^2/2m$. We define the bare Fermi momentum $k_F=\sqrt{2m\mu}$ and bare Fermi velocity $v_F=\sqrt{2\mu/m}$.
The fields $\phi_a$ couple to the operator:
\begin{align}
O_a(x)=\sum_{i=1,\sigma=\uparrow,\downarrow}^{N_f}\int\frac{\d^3 k\d^3 q}{(2\pi)^6} \lambda_{a,\sigma}(\mathbf{k})\psi_{\sigma,i}^\dagger\big(k-\frac{q}{2}\big)\psi_{i,\sigma}\big(k+\frac{q}{2}\big)\e^{iq x}
\end{align}
where we use the notation $k x=-\omega\tau+\mathbf{k}\cdot\mathbf{x}$ and the Fourier transformed fields:
\begin{align}
\psi(x)&=\int\frac{\d^3k}{(2\pi)^{3}}\e^{ikx}\psi(k)\\
\psi^\dagger(x)&=\int\frac{\d^3k}{(2\pi)^{3}}\e^{-ikx}\psi^\dagger(k).
\end{align}
The coupling function $\lambda_{a,\sigma}(\mathbf{k})$ characterizes how the fermion couples to the order parameter fluctuations. Only the dependence on the \emph{direction} of $\mathbf{k}$ is relevant for low energy excitations close to the Fermi surface.  A $\phi^4$ term can be consistently left out since it is not generated in the $N_f=0$ theory\footnote{The $\phi^4$ term is relevant in the strict $N_f=0$ theory but the interaction is irrelevant (at $T=0$) in the more physically interesting case of a Landau-damped boson.}. We consider dynamic bosons but the static case of our results is obtained by the definitions
\begin{align}
\phi_a&=\tilde{\phi}_a/c_a\nonumber\\
\lambda_{a,\sigma}(\mathbf{k})&=c_a\tilde{\lambda}_{a,\sigma}(\mathbf{k})\nonumber\\
r_a&=c_a^2\tilde{r}_a
\label{eq:static}
\end{align}
and the limit $c_a\rightarrow\infty$.

With an eye towards finding instabilities to pairing and charge/spin order we consider correlation functions of pair creation/annihilation operators and charge/spin density operators:
\begin{align}
b(x)&=\sum_{i=1}^{N_f}\psi_{i,\uparrow}(x)\psi_{i,\downarrow}(x)\\
b^\dagger(x)&=\sum_{i=1}^{N_f}\psi^\dagger_{i,\downarrow}(x)\psi^\dagger_{i,\uparrow}(x)\\
\rho_c(x)&=\rho_\uparrow(x)+\rho_\downarrow(x)\\
\rho_s(x)&=\rho_\uparrow(x)-\rho_\downarrow(x)
\end{align}
where
\begin{align}
\rho_\sigma(x)&=\sum_{i=1}^{N_f}\psi^\dagger_{i,\sigma}(x)\psi_{i,\sigma}(x).
\end{align}
$\rho_\sigma$ is invariant under the global $U(N_f)$ whereas $b$ is not.
We consider real space correlators of these operators:
\begin{align*}
\langle b^\dagger(0)b(x)\rangle,\ \langle \rho_{c}(0)\rho_{c}(x)\rangle,\ \langle \rho_{s}(0)\rho_{s}(x)\rangle
\end{align*}
Writing the charge and spin correlators in terms of $\rho_{\uparrow}, \rho_\downarrow$ correlation functions we have:
\begin{align}
\langle \rho_{c,s}(0)\rho_{c,s}(x)\rangle&=\sum_\alpha\langle \rho_\alpha(0)\rho_\alpha(x)\rangle\pm2\langle \rho_\uparrow(0)\rho_\downarrow(x)\rangle
\end{align}
The above operators all contain two fermionic fields. This, together with the $N_f\rightarrow0$ limit, means that order by order the contributing diagrams only contain two continuous fermionic lines, each starting and ending at the above bilinears. There is only one way of connecting these lines in the case of the pair operators and the $\langle \rho_\uparrow(0)\rho_\downarrow(x)\rangle$
 correlator but for the $\langle \rho_\alpha\rho_\alpha\rangle$ correlator these lines can be connected in two different ways. The fermionic lines attached to the density operators can either connect back to the same operator or connect the other insertion, see Fig.~\ref{fig:contract}.
\begin{figure}
\hfill
\subfigure[Small $N_f$ leading part\label{fig:contract1}]{\includegraphics[width=.45\textwidth]{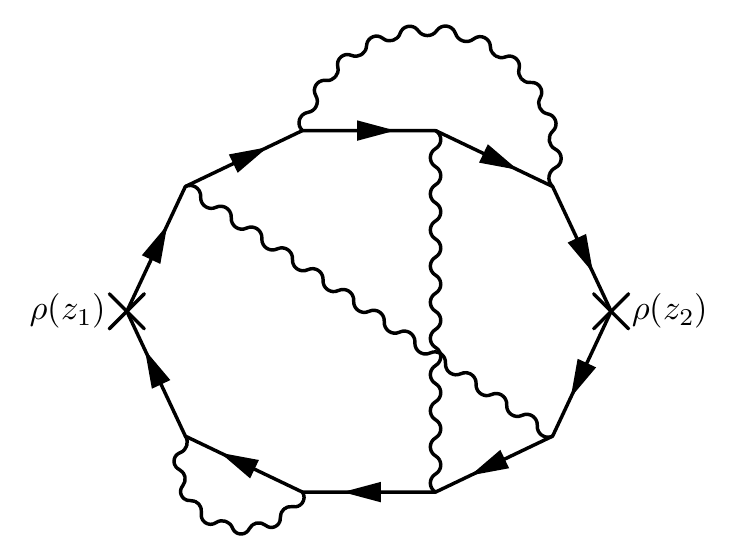}}
\hfill\hfill
\subfigure[Small $N_f$ subleading part\label{fig:contract2}]{\includegraphics[width=.45\textwidth]{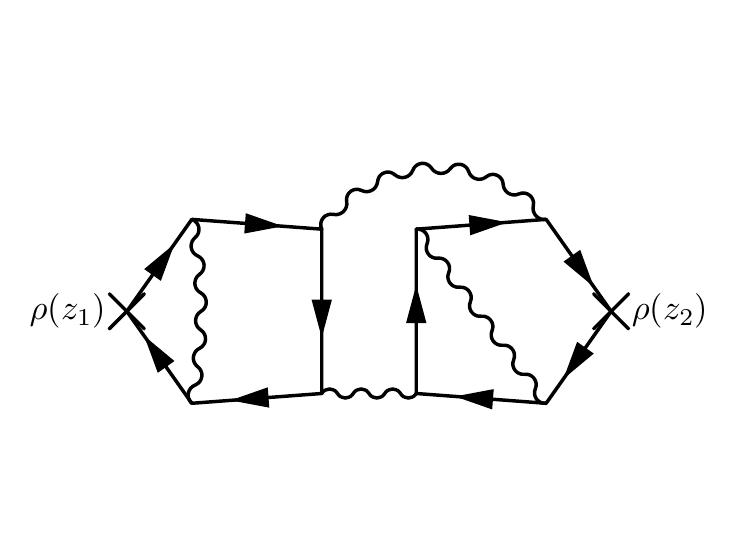}}\hfill
\caption{Once the fermionic fields have been integrated out and the resulting determinant set to 1 (by the small $N_f$ limit) there are two classes of diagrams contributing to the fermion density-density correlator. (a) Shows one of the diagrams in the first class that contributes at order $N_f$ (b) Shows a diagram in the second class that contributes at order $N_f^2$.}
\label{fig:contract}
\end{figure}
We get an additional sum over fermionic flavors when the propagators connect back to the same insertion as in Fig.~\ref{fig:contract2}. This means that those contractions are subleading in the small $N_f$ limit. The $\langle \rho_\uparrow\rho_\downarrow\rangle$ correlator only has that type of contraction and will always be subleading:
\begin{align}
\langle b^\dagger(0)b(x)\rangle&=\mathcal{O}(N_f)\\
\langle \rho_\sigma(0)\rho_\sigma(x)\rangle&=\mathcal{O}(N_f)\label{eq:alphaSusc}\\
\langle \rho_\uparrow(0)\rho_\downarrow(x)\rangle&=\mathcal{O}(N_f^2)
\end{align}
We can thus not differentiate between charge and spin density fluctuations at leading order in the $N_f\rightarrow0$ limit. 
Let us now see how we can calculate these correlation functions. We start off by writing a generating functional with sources $J_{\sigma,i}, J^\dagger_{\sigma,i}$ for the fermionic fields:
\begin{align}
Z[J^\dagger,J]
=\int\D\psi^\dagger\D\psi\D\phi& \exp\Big(-S[\psi^\dagger,\psi,\phi]-\int\d^3z(J_{\sigma,i}\psi_{\sigma,i}^\dagger+J_{\sigma,i}^\dagger\psi_{\sigma,i})\Big)
\end{align}
Next we integrate out the fermionic fields. To do that we write out the interaction in terms of real space fermionic fields:
\begin{align}
\int\d^3x\phi_a(x)O_a(x)=\sum_{i=1,\sigma=\uparrow,\downarrow}^{N_f}
\int\frac{\d^3 k\d^3 x_1\d^3 x_2}{(2\pi)^3}
\e^{ik(x_1-x_2)}
\lambda_{a,\sigma}(\mathbf{k})
\psi_{\sigma,i}^\dagger\big(x_1\big)
\psi_{\sigma,i}\big(x_2\big)
\phi_a\left(\frac{x_1+x_2}{2}\right)
\end{align}
The fermionic integral is Gaussian and the generating functional is obtained as
\begin{align}
Z[J^\dagger,J]
=\int\D\phi& \exp\Big(-S_{\mathrm{det}}[\phi]-S_b[\phi]-\int\d^3 z\d^3 z' J_{\sigma,i}^\dagger(z)G_{\sigma\sigma',ij}[\phi](z,z')J_{\sigma',j}(z')\Big)
\end{align}
The background field Green's function is diagonal in spin and flavor indices and independent of flavor so we can write it as $G_{\sigma\sigma',ij}[\phi](z,z')=\delta_{\sigma\sigma'}\delta_{ij}G_{\sigma}[\phi](z,z')$. The background field Green's function is the solution to
\begin{align}
\Big(&  -\partial_{\tau_1}-\epsilon(-i\nabla_1)+\mu
    \Big)G_{\sigma}[\phi](x_1,x_2) +\\
& \sum_{a=1}^{N_b}
    \int\frac{\d^3 k\d^3 x'}{(2\pi)^3}
    \e^{ik(x_1-x')}
\lambda_{a,\sigma}(\mathbf{k})\phi_a\left(\frac{x_1+x'}{2}\right)G_{\sigma}[\phi](x',x_2)\\
&= \delta^3(x_1-x_2)\end{align}
and
\begin{align}
S_{\mathrm{det}}[\phi]=-\tr\log G_{\sigma\sigma',ij}[\phi](z,z')=-N_f\sum_{\sigma}\tr\log G_\sigma[\phi](z,z').
\end{align}
In momentum space:
\begin{align}
G(x_1,x_2)&=\int\frac{\d^3k_1\d^3k_2}{(2\pi)^{6}}\e^{ik_1x_1-ik_2x_2}G(k_1,k_2),
\end{align}
we have
\begin{align}
    \Big(i\omega_1-\epsilon(k)+\mu\Big)G_{\sigma}[\phi](k_1,k_2)
     +& \sum_{a=1}^{N_b}
    \int\frac{\d^3 q}{(2\pi)^3}
\lambda_{a,\sigma}\left(\mathbf{k}_1-\frac{\mathbf{q}}{2}\right)\phi_a\left(q\right)G_{\sigma}[\phi](k_1-q,k_2)\nonumber\\
&= (2\pi)^3\delta^3(k_1-k_2)\label{eq:Gdef}
\end{align}
Following the procedure of \cite{paper3} we expand around a point $\hat{n}k_F$ on the Fermi surface. Now we additionally expand the functions $\lambda_{a,\sigma}(\mathbf{k})=\lambda_{a,\sigma,\hat{n}}+\mathcal{O}(\mathbf{k}-\hat{n}k_F)$, and Fourier transform back to real space:
\begin{align}
\Big(-\partial_{\tau_1}+v_F(i\hat{n}\cdot\nabla_1+k_F)+\sum_{a=1}^{N_b}\lambda_{a,\sigma,\hat{n}}\phi_{a,\sigma}(x_1)
    \Big)G_{\sigma,\hat{n}}[\phi](x_1,x_2)
= \delta^3(x_1-x_2).\label{eq:pde}
\end{align}
The Euclidean time variables $\tau_1$ and $\tau_2$ are constrained to the interval $(0,\beta)$ when the theory is considered at a finite temperature. The fermionic Green's function $G_{\sigma}[\phi]$ has antiperiodic boundary conditions and the bosonic fields $\phi_a$ have periodic boundary conditions in the $\tau$ direction. The solution to Eq.~\eqref{eq:pde} with antiperiodic boundary conditions is given by
\begin{align}
G&_{\sigma,\hat{n}}[\phi](x_1,x_2)=f^-_{\hat{n}}(x_1-x_2)\exp\left(ik_F \hat{n}\cdot(\mathbf{x}_1-\mathbf{x}_2)+I_{\sigma,\hat{n}}[\phi](x_1,x_2))\right)\label{eq:linPDEsol}
\end{align}
where
\begin{align}
I_{\sigma,\hat{n}}[\phi_a](x_1,x_2)=\sum_{a=1}^{N_b}\lambda_{a,\sigma,\hat{n}}\int\d^3x \phi_a(x)\left(f^+_{\hat{n}}(x-x_1)-f^+_{\hat{n}}(x-x_2)\right)
\end{align}
and $f_{\hat{n}}^{\pm}(x)$ are the $\beta$-periodic\footnote{Note that in finding the solution for the fermionic background field Green's function we make use of \zeqref{eq:fplus} which is a Green's function for the free fermion (in patch-coordinates) but with boson statistics. The method we employ here is referred to as functional bosonization in some of the earlier works \cite{Kopietz1996}.} ($+$) and $\beta$-antiperiodic ($-$) (in $\tau$) solutions of
\begin{align}
\Big(-\partial_{\tau_1}+iv_F\hat{n}\cdot\nabla\Big)f^{\pm}_{\hat{n}}(x_1,x_2)
= \delta^3(x_1-x_2)\label{eq:pde2}.
\end{align}
These functions can be expressed as:
\begin{align}
f^{+}_{\hat{n}}(x)&=\frac{T \cot \left(\pi  T \left(\frac{i \hat{n}\cdot\mathbf{x}}{v_F}-\tau \right)\right)}{2 v_F}\delta(\hat{n}\times \mathbf{x})\label{eq:fplus}\\
f^{-}_{\hat{n}}(x)&=\frac{T \csc \left(\pi  T \left(\frac{i \hat{n}\cdot\mathbf{x}}{v_F}-\tau\right)\right)}{2 v_F}\delta(\hat{n}\times \mathbf{x})\label{eq:fminus}.
\end{align}
We define the cross product as $\mathbf{a}\times \mathbf{b}=a_xb_y-a_yb_x$. The found solution $G_{\sigma,\hat{n}}[\phi](x_1,x_2)$ is only valid when acting on momentum modes close to $k_F\hat{n}$. As in \cite{paper3} we project onto these modes and integrate over different directions $\hat{n}(\theta)$ to obtain an operator $G_{\mathrm{IR}}[\phi](x_1,x_2)$ that is valid as long as $|\mathbf{x_2}-\mathbf{x_1}|\gg k_F^{-1}$:
\begin{align}
G_{\mathrm{IR},\sigma}[\phi](x_1,x_2)
=&\int\d^2\mathbf{x}'\int\frac{\d k k\d \theta}{(2\pi)^2} f^{-}_{\hat{n}(\theta)}(\tau_1-\tau_2,\mathbf{x}_1-\mathbf{x}')
\times\nonumber\\
&\times\exp\left(i\hat{n}(\theta)\cdot\left(k_F(\mathbf{x}_1-\mathbf{x}')+k (\mathbf{x}'-\mathbf{x}_2)\right)+I_{\sigma,\hat{n}(\theta)}[ \phi ](\tau_1,\mathbf{x}_1;\tau_2,\mathbf{x}')\right)\nonumber\\
=&\int\d\eta\int\frac{\d k k\d \theta}{(2\pi)^2}\frac{T \csc \left(\pi  T \left(-\frac{i\eta}{v_F}+\tau_2-\tau_1 \right)\right)}{2 v_F}
\times\nonumber\\
&\times\exp\left(i(k-k_F)\eta+ik\hat{n}(\theta)\cdot (\mathbf{x}_1-\mathbf{x}_2)+I_{\sigma,\hat{n}(\theta)}[ \phi ](\tau_1,\mathbf{x}_1;\tau_2,\mathbf{x}_1+\eta\hat{n}(\theta))\right)\label{eq:proj}
\end{align}
Here we parametrized $\mathbf{x}'=\mathbf{x}_1+\eta \hat{n}(\theta) + \nu \hat{n}(\theta+\pi/2)$ and integrated over $\nu$.

We consider $k_F\gg v_FT, |x_2-x_1|^{-1}$ and a field $\phi$ without momentum components of order $k_F$. The $\eta$ integral can be viewed as a Fourier transform to the variable $k-k_F$. If we assume $v_F|\tau_2-\tau_1|\gg k_F^{-1}$ then the $\eta$ dependence of the rest of the integrand has no momentum components of order $k_F$ and this Fourier transform will be small unless $|k-k_F|\ll k_F$. We later comment on what happens for $v_F|\tau_2-\tau_1|\sim k_F^{-1}$. For the leading behavior we can thus assume $k$ is of order $k_F$. We then see that the exponent $ik\hat{n}(\theta)\cdot (\mathbf{x}_1-\mathbf{x}_2)$ makes the $\theta$ integral oscillate rapidly except at the two points where $\hat{n}$ is parallel or anti-parallel to $\mathbf{x}_{12}=\mathbf{x}_2-\mathbf{x}_1$. We perform saddle-point approximations around these points:
\begin{align}
G_{\mathrm{IR},\sigma}^{\text{saddle point}}&[\phi](x_1,x_2)
=-\sum_{s=\pm1}\int\frac{\d\eta\d k}{(2\pi)^{3/2}}\frac{T \csc \left(\pi  T \left(\tau_1-\tau_2 +\frac{i\eta}{v_F}\right)\right)}{2 v_F}\sqrt{\frac{k}{ |\mathbf{x}_{12}|}}
\times\nonumber\\
&\times\exp\left(i(k-k_F)\eta+is\pi/4-isk|\mathbf{x}_{12}|+I_{\sigma,s\hat{x}_{12}}[ \phi ](\tau_1,\mathbf{x}_1;\tau_2,\mathbf{x}_1+\eta s\hat{x}_{12})\right)
\end{align}


We proceed with the $k$ integral. It formally diverges but comes from the Fourier transform so we treat it as such and use
\begin{align}
\int_0^{\infty}\d k\e^{ikz}\sqrt{k}\rightarrow\frac{\sqrt{\pi}}{2(-iz)^{3/2}}
\end{align}
for these integrals. We then have
\begin{align}
G_{\mathrm{IR},\sigma}^{\text{saddle point}}&[\phi](x_1,x_2)
=-\sum_{s=\pm1}\int\d\eta\e^{-i\eta k_F}\frac{T \csc \left(\pi  T \left(\tau_1-\tau_2 +\frac{i\eta}{v_F}\right)\right)}{8\pi v_F\sqrt {2|\mathbf{x}_{12}|} (-i(\eta-s|\mathbf{x}_{12}|))^{3/2}}
\times\nonumber\\
&\times\exp\left(is\pi/4+I_{\sigma,s\hat{x}_{12}}[ \phi ](\tau_1,\mathbf{x}_1;\tau_2,\mathbf{x}_1+\eta s\hat{x}_{12})\right)
\end{align}
The $\eta$ integral can be viewed as the high frequency limit of a Fourier transform to the variable $k_F$. For $|\tau_2-\tau_1|\gg1/k_F$%
, the high frequency part of the transformed function is dominated by the singularities at $\eta=s|\mathbf{x}_{12}|$. We can expand around them and perform the Fourier transform to get the leading large $k_F$ limit:
\begin{align}
G_{\mathrm{IR},\sigma}^{\text{saddle point}}[\phi](x_1,x_2)
=&-T\sqrt{\frac{k_F}{2\pi |\mathbf{x}_{12}|}}\sum_{s=\pm1}\frac{\e^{is\pi/4-isk_F|\mathbf{x}_{12}|+I_{\sigma,s\hat{x}_{12}}[ \phi ](x_1,x_2)} }{2 v_F\sin \left(\pi  T \left(\tau_1-\tau_2 +\frac{is|\mathbf{x}_{12}|}{v_F}\right)\right)}+\text{subleading}
\label{eq:Gsaddlepoint}
\end{align}
We would also like to consider equal-time correlation functions so now we consider the case where $v_F|\tau_2-\tau_1|\sim k_F^{-1}$. Consider Eq.~\eqref{eq:proj}. The $\eta$ integral now gets a contribution also for $|k-k_F|\sim k_F$ since the fraction has frequencies of order $k_F$ when $\eta\sim v_F|\tau_2-\tau_1|$. To calculate this contribution we make use of
\begin{align}
I_{\sigma,\hat{n}}[\phi](x_1,x_2)=\mathcal{O}((\tau_2-\tau_1)\partial_\tau\phi)+ \mathcal{O}(|\mathbf{x}_2-\mathbf{x}_1||\nabla\phi|)
\end{align}
Since $v_F|\tau_2-\tau_1|\sim k_F^{-1}$ and $\phi$ contains no scales of order $k_F$ the first term can be neglected. The second term can also be neglected since $\eta\sim v_F|\tau_2-\tau_1|$ for the contribution we are interested in. We can set $\phi=0$ to leading order. Note that this is only true for the contribution missed in the saddle-point approximation. We thus have
\begin{align}
G_{\mathrm{IR}}[\phi](x_1,x_2)&=G_{\mathrm{IR}}^{\text{saddle point}}[\phi](x_1,x_2)+G_{\mathrm{IR}}[0](x_1,x_2)-G_{\mathrm{IR}}^{\text{saddle point}}[0](x_1,x_2)+\nonumber\\
&\quad+\text{subleading}
\end{align}
Finding the free propagator in real space we find that the last two terms above cancel and the saddle point solution actually works for small $\tau_2-\tau_1$ as well.

Now we consider two-point functions of the composite operators. Differentiating the generating functional with respect to the sources and only keeping the leading contribution for small $N_f$ we have:
\begin{align}
\langle \rho_\sigma(0)\rho_\sigma(\tau,\mathbf{x})\rangle &=-N_f
\int\D\phi\,
G_{\sigma,\mathrm{IR}}[\phi](0;\tau,\mathbf{x})
G_{\sigma,\mathrm{IR}}[\phi](\tau,\mathbf{x};0)\e^{-S_B[\phi]}+\text{subleading}\label{eq:rhorhoGbg}
\\
\langle b^\dagger(0)b(\tau,\mathbf{x})\rangle &=
N_f\int\D\phi\,
G_{\uparrow,\mathrm{IR}}[\phi](0;\tau,\mathbf{x})
G_{\downarrow,\mathrm{IR}}[\phi](0;\tau,\mathbf{x})
\e^{-S_B[\phi]}+\text{subleading}\label{eq:bbGbg}
\end{align}
The determinant action has now been omitted since it is subleading in $N_f$. Note that the density correlator contains one more contraction but it is also subleading in $N_f$ and can be ommited, see Fig.~\ref{fig:contract}. Note that from here on, in ``subleading'' we include terms that are either subleading in the small $N_f$ limit or in the $x\equiv|\mathbf{x}|\gg k_F^{-1}$ limit. We now integrate out the fields $\phi_a$. Expanding the background field Green's functions and combining terms we have:
\begin{align}
\langle \rho_\sigma(0)\rho_\sigma(\tau,\mathbf{x})\rangle &=
-\frac{N_fk_FT^2}{8\pi v_F^2 x}\int\D\phi\,
\sum_{\substack{s_1=\pm1 \\ s_2=\pm1}}
\times\nonumber\\
&\times
\frac{\e^{
i(s_1-s_2)(\pi/4-k_Fx)
+I_{\sigma,s_1\hat{x}}[ \phi ](0,0;\tau,\mathbf{x})
+I_{\sigma,s_2\hat{x} }[ \phi ](\tau,\mathbf{x};0,0)}}{
\sin \left(\pi  T \left(-\tau +\frac{is_1x}{v_F}\right)\right)
\sin \left(\pi  T \left(\tau -\frac{is_2x}{v_F}\right)\right)}
\e^{-S_B[\phi]}+\text{subleading}\label{eq:rhorhoGbg2}
\\
\langle b^\dagger(0)b(\tau,\mathbf{x})\rangle &=
\frac{N_fk_FT^2}{8\pi v_F^2 x}\int\D\phi\,
\sum_{\substack{s_1=\pm1 \\ s_2=\pm1}}
\times\nonumber\\
&\times
\frac{
\e^{
i(s_1+s_2)(\pi/4-k_Fx)
+I_{\uparrow,s_1\hat{x}}[ \phi ](0,0;\tau,\mathbf{x})
+I_{\downarrow,s_2\hat{x}}[ \phi ](0,0;\tau,\mathbf{x})
}
}{
\sin \left(\pi  T \left(-\tau +\frac{is_1x}{v_F}\right)\right)
\sin \left(\pi  T \left(-\tau +\frac{is_2x}{v_F}\right)\right)}
\e^{-S_B[\phi]}+\text{subleading}\label{eq:bbGbg2}
\end{align}
where $\mathbf{x}\equiv x\hat{x}$. We have redefined $s_2\rightarrow -s_2$ in \eqref{eq:rhorhoGbg2} compared to \eqref{eq:Gsaddlepoint} so that the four different $s_1, s_2$ contributions correspond to processes where the two fermions being exchanged between the bilinears live in the Fermi surface patches near $s_{1}k_F\hat{x}$ and $s_{2}k_F\hat{x}$, respectively. The $I_{\sigma,\hat{n}}[ \phi ]$ functions are linear in $\phi$ and should be treated as sources for $\phi$ in the path integral. We expand the two $I_{\sigma,\hat{n}}[ \phi ]$ functions and identify the sources:
\begin{align}
I_{\sigma,s_1\hat{x}}[\phi](0,X)+
I_{\sigma,s_2\hat{x}}[\phi](X,0)
&=
\int\d^3X' \phi(X')
\Bigg[
\lambda_{\sigma,s_1\hat{x}}f^+_{s_1\hat{x}}(X')
-\lambda_{\sigma,s_1\hat{x}}f^+_{s_1\hat{x}}(X'-X)+
\nonumber\\
&\quad\quad
+\lambda_{\sigma,s_2\hat{x}}f^+_{s_2\hat{x}}(X'-X)
-\lambda_{\sigma,s_2\hat{x}}f^+_{s_2\hat{x}}(X')
\Bigg]
\nonumber\\
&\equiv
-\int\d^3X'  J_{s_1,s_2,\lambda,X}^{\rho\rho}(X')\phi(X')
\end{align}
\begin{align}
I_{\uparrow,s_1\hat{x}}[\phi](0,X)+
I_{\downarrow,s_2\hat{x}}[\phi](0,X)
&=
\int\d^3X' \phi(X')
\Bigg[
\lambda_{\uparrow,s_1\hat{x}} f^+_{s_1\hat{x}}(X')
-\lambda_{\uparrow,s_1\hat{x}} f^+_{s_1\hat{n}}(X'-X)+
\nonumber\\
&\quad\quad
+\lambda_{\downarrow,s_2\hat{x}} f^+_{s_2\hat{x}}(X')
-\lambda_{\downarrow,s_2\hat{x}}f^+_{s_2\hat{x}}(X'-X)
\Bigg]
\nonumber\\
&\equiv
-\int\d^3X'  J^{b^\dagger b}_{s_1,s_2,\lambda,X}(X')\phi(X')
\end{align}
Here we use $X=(\tau,\mathbf{x})$ to avoid confusion with $x=|\mathbf{x}|$. $S_B[\phi]$ is diagonal in momentum space so we write the source terms in momentum space:
\begin{align}
\int\d^3X'  J_{s_1,s_2,\lambda,X}^{\rho\rho}(X')\phi(X')
&=
T\sum_{\omega_n}\int\frac{\d^2\mathbf{k}}{(2\pi)^2}  J_{s_1,s_2,\lambda,X}^{\rho\rho}(k)\phi(-k)
\end{align}

\begin{align}
\int\d^3X'  J^{b^\dagger b}_{s_1,s_2,\lambda,X}(X')\phi(X')
&=
T\sum_{\omega_n}\int\frac{\d^2\mathbf{k}}{(2\pi)^2}  J_{s_1,s_2,\lambda,X}^{b^\dagger b}(k)\phi(-k)
\end{align}
where
\begin{align}
J_{s_1,s_2,\lambda,X}^{\rho\rho}(k) &= 
\left(\lambda_{\sigma,s_1\hat{x}}f^+_{s_1\hat{x}}(k) - \lambda_{\sigma,s_2\hat{x}}f^+_{s_2\hat{x}}(k)\right)\left(\e^{-iXk}-1\right)
\end{align}
\begin{align}
J_{s_1,s_2,\lambda,X}^{b^\dagger b}(k)& = 
\left(\lambda_{\uparrow,s_1\hat{x}} f^+_{s_1\hat{x}}(k) + \lambda_{\downarrow,s_2\hat{x}}f^+_{s_2\hat{x}}(k) \right)\left(\e^{-iXk}-1\right)
\end{align}
\begin{align}
f^{+}_{\hat{n}}(k)&=\frac{1}{i\omega_n-v_F\hat{n}\cdot\mathbf{k}}\label{eq:fplusMomentum}
\end{align}
and the sum is over bosonic Matsubara frequencies $\omega_n=2\pi n T$. Now we perform the path integral over $\phi$ for a general source $J(k)$ and the bosonic action $S_B[\phi]$:
\begin{align}
\int\D\phi \exp\Bigg(
T\sum_{\omega_n}\int\frac{\d^2\mathbf{k}}{(2\pi)^2} 
\Big[
-J(k)\phi(-k)
-
\frac{\phi(k)\phi(-k)}{2D(k)}
\Big]
\Bigg)
=
\nonumber\\
=
\exp\Bigg(
T\sum_{\omega_n}\int\frac{\d^2\mathbf{k}}{(2\pi)^2} 
\frac{J(k)J(-k)D(k)}{2}\label{eq:pathIntSol}
\Bigg)
\end{align}
where
\begin{align}
D(k) = \frac{1}{\omega_n^2+c^2k_x^2+c^2 k_y^2+r}.\label{eq:Ddef}
\end{align}
Now consider the exponent of the RHS of \zeqref{eq:pathIntSol} with the source $J_{s_1,s_2,\lambda,X}^{\rho\rho}(k)$ and $J_{s_1,s_2,\lambda,X}^{b^\dagger b}(k)$:
\begin{align}
E_{s_1,s_2,\lambda}^{\rho\rho}(X)&\equiv T\sum_{\omega_n}\int\frac{\d^2\mathbf{k}}{(2\pi)^2} 
\frac{J_{s_1,s_2,\lambda,X}^{\rho\rho}(k)J_{s_1,s_2,\lambda,X}^{\rho\rho}(-k)D(k)}{2}
\nonumber\\
&=
T\sum_{\omega_n}\int\frac{\d^2\mathbf{k}}{(2\pi)^2} 
\left(\lambda_{\sigma,s_1\hat{x}}f^+_{s_1\hat{x}}(k) - \lambda_{\sigma,s_2\hat{x}}f^+_{s_2\hat{x}}(k)\right)\times
\nonumber\\
&\quad\quad\quad\quad\quad\quad\quad\times
\left(\lambda_{\sigma,s_1\hat{x}}f^+_{s_1\hat{x}}(-k) - \lambda_{\sigma,s_2\hat{x}}f^+_{s_2\hat{x}}(-k)\right)\left(1-\cos(Xk)\right)
D(k)
\nonumber\\
&=
-\lambda_{\sigma,s_1\hat{x}}^2 h_{s_1\hat{x},s_1\hat{x}}(X)
+2\lambda_{\sigma,s_1\hat{x}} \lambda_{\sigma,s_2\hat{x}}h_{s_1\hat{x},s_2\hat{x}}(X)
- \lambda_{\sigma,s_2\hat{x}}^2h_{s_2\hat{x},s_2\hat{x}}(X)
\end{align}

\begin{align}
E_{s_1,s_2,\lambda}^{b^\dagger b}(X)&\equiv T\sum_{\omega_n}\int\frac{\d^2\mathbf{k}}{(2\pi)^2} 
\frac{J_{s_1,s_2,\lambda,X}^{b^\dagger b}(k)J_{s_1,s_2,\lambda,X}^{b^\dagger b}(-k)D(k)}{2}
\nonumber\\
&=
T\sum_{\omega_n}\int\frac{\d^2\mathbf{k}}{(2\pi)^2} 
\left(\lambda_{\uparrow,s_1\hat{x}}f^+_{s_1\hat{x}}(k) + \lambda_{\downarrow,s_2\hat{x}}f^+_{s_2\hat{x}}(k)\right)\times
\nonumber\\
&\quad\quad\quad\quad\quad\quad\quad\times
\left(\lambda_{\uparrow,s_1\hat{x}}f^+_{s_1\hat{x}}(-k) + \lambda_{\downarrow,s_2\hat{x}}f^+_{s_2\hat{x}}(-k)\right)\left(1-\cos(Xk)\right)
D(k)
\nonumber\\
&=
-\lambda_{\uparrow,s_1\hat{x}}^2 h_{s_1\hat{x},s_1\hat{x}}(X)
-2\lambda_{\uparrow,s_1\hat{x}} \lambda_{\downarrow,s_2\hat{x}}h_{s_1\hat{x},s_2\hat{x}}(X)
- \lambda_{\downarrow,s_2\hat{x}}^2h_{s_2\hat{x},s_2\hat{x}}(X)
\end{align}
where
\begin{align}
h_{\hat{n}_1,\hat{n}_2}(X)\equiv T\sum_n\int\frac{\d^2 \mathbf{k}}{(2\pi)^2}\left(\cos(Xk)-1\right)f^+_{\hat{n}_1}(k)f^+_{\hat{n}_2}(-k)D(k).
\end{align}
We can now use this for the path integrals in \zeqref{eq:rhorhoGbg2} and \zeqref{eq:bbGbg2}:
\begin{align}
\langle \rho_\sigma(0)\rho_\sigma(\tau,\mathbf{x})\rangle &=
\frac{N_fk_FT^2}{8\pi v_F^2 x}
\sum_{\substack{s_1=\pm1 \\ s_2=\pm1}}
\frac{
\exp\left(i(s_1-s_2)(\pi/4-k_Fx)+E_{s_1,s_2,\lambda}^{\rho\rho}(X)\right)
}{
\sin \left(\pi  T \left(-\tau +\frac{is_1x}{v_F}\right)\right)
\sin \left(\pi  T \left(-\tau +\frac{is_2x}{v_F}\right)\right)}
+\text{sub.}
\\
\langle b^\dagger(0)b(\tau,\mathbf{x})\rangle &=
\frac{N_fk_FT^2}{8\pi v_F^2 x}
\sum_{\substack{s_1=\pm1 \\ s_2=\pm1}}
\frac{
\exp\left(i(s_1+s_2)(\pi/4-k_Fx)+E_{s_1,s_2,\lambda}^{b^\dagger b}(X)\right)
}{
\sin \left(\pi  T \left(-\tau +\frac{is_1x}{v_F}\right)\right)
\sin \left(\pi  T \left(-\tau +\frac{is_2x}{v_F}\right)\right)}
+\text{sub.}
\end{align}
The $E_{s_1,s_2,\lambda}(X)$-functions should be understood as capturing the interaction corrections to the contribution to the real space fermion four-point function where one fermion is in the patch at $s_1k_F\hat{x}$ and the other is in the patch at $s_2k_F\hat{x}$.

Furthermore, these functions are composed of the $h_{\hat{n}_1,\hat{n}_2}(X)$-functions which capture the effects of a fermion in patch $\hat{n}_1k_F$ exchanging a boson with a fermion in patch $\hat{n}_2k_F$. Note that the $E_{s_1,s_2,\lambda}(X)$ functions contain $h_{s_1\hat{x},s_1\hat{x}}$ even when $s_1\neq s_2$. These are the self-energy corrections, the fermions always exchange bosons with themselves and thus within the same patch. Also note that $E_{s,s,\lambda,X}^{\rho\rho}=0$, the self-energy corrections precisely cancel with the processes that exchange bosons between the two fermions for the $\langle\rho\rho\rangle$ correlator as expected from \cite{PhysRevB.58.15449}.

We define
\begin{align}
h^\pm(\tau,x)=T\sum_n\int\frac{\d k_x\d k_y}{(2\pi)^2}\frac{\cos(\omega_n\tau-k_xx)-1}{(i\omega_n-v_Fk_x)(-i\omega_n\pm v_Fk_x)}D(\omega_n,k_x,k_y),\label{eq:hDef}
\end{align}
and note that
\begin{align}
h_{\hat{x},\hat{x}}(X)&=h^+(\tau,x)\\
h_{-\hat{x},\hat{x}}(X)&=h^-(\tau,x)\\
h_{\hat{x},-\hat{x}}(X)&=h^-(\tau,x)\\
h_{-\hat{x},-\hat{x}}(X)&=h^+(\tau,-x).
\end{align}
We additionally define $\lambda_{\sigma}^\pm=\lambda_{\sigma,\pm\hat{x}}$ and expand the $s_1, s_2$ summations to obtain:
\begin{align}
\langle \rho_\sigma(0,0)&\rho_\sigma(\tau,\mathbf{x})\rangle=
\frac{N_f k_F  T^2}{8 \pi  v_F^2 x} \Bigg[
\csc ^2\left(\pi  T (\tau -\frac{i x}{v_F})\right)+\csc ^2\left(\pi  T (\tau +\frac{i x}{v_F})\right)
\nonumber\\
&+
\frac{4 \sin (2 k_F x) \exp \left(\sum_a\left[2 \lambda^-_{\sigma a} \lambda^+_{\sigma a} h^-_a(\tau ,x)-{\lambda^{-}_{\sigma a}}^2 h^+_a(\tau ,-x)-{\lambda^{+}_{\sigma a}}^2 h^+_a(\tau ,x)\right]\right)}{\cosh \left(\frac{2 \pi  T x}{v_F}\right)-\cos (2 \pi  \tau  T)}
\Bigg]
\nonumber\\
&+\text{subleading}\label{eq:densDensSol}
\end{align}
\begin{align}
\langle b^\dagger(0)b(\tau,\mathbf{x})\rangle=
\frac{N_fk_F T^2}{8 \pi  v_F^2 x} \Bigg( &\csc \left(\pi  T \left(\tau -\frac{i x}{v_F}\right)\right) \csc \left(\pi  T \left(\tau +\frac{i x}{v_F}\right)\right)
\nonumber\\
&\times\Big[
\exp \left(-\sum_a\left[{\lambda^+_{\uparrow a}}^2 h^+_a(\tau ,x)+ 2 {\lambda^-_{\downarrow a}}{\lambda^+_{\uparrow a}} h^-_a(\tau ,x)+{\lambda^-_{\downarrow a}}^2 h^+(\tau ,-x)\right]\right)
\nonumber\\
&
+
\exp \left(-\sum_a\left[{\lambda^+_{\downarrow a}}^2 h^+_a(\tau ,x)+ 2{\lambda^-_{\uparrow a}} {\lambda^+_{\downarrow a}} h^-(\tau ,x)+{\lambda^-_{\uparrow a}}^2 h^+_a(\tau ,-x)\right]\right)\Big]
\nonumber\\
&
+i
\csc ^2\left(\pi  T \left(\tau -\frac{i x}{v_F}\right)\right) \exp \left(-2 i k_F x-\sum_a({\lambda^-_{\downarrow a}}+{\lambda^-_{\uparrow a}})^2 h^+_a(\tau ,-x)\right)
\nonumber\\
&
-i
\csc ^2\left(\pi  T \left(\tau +\frac{i x}{v_F}\right)\right) \exp \left(2 i k_F x-\sum_a({\lambda^+_{\downarrow a}}+{\lambda^+_{\uparrow a}})^2 h^+_a(\tau ,x)\right)
\Bigg)\nonumber\\
&+\text{subleading}\label{eq:pairPairSol}
\end{align}
Here we have reintroduced the $a$ indices and the $h_a^\pm(\tau,x)$ are defined with the propagator $D_a(\omega_n,k_x,k_y)$ for the field $\phi_a$. 

\section{Divergences in long distance correlation functions\label{sec:instab}}
At the critical points $T=0$, $r_a=0$ there is no internal scale in \eqref{eq:hDef} so by dimensional analysis we have
\begin{align}
h_{a}^\pm(\tau=0,x)\sim x
\end{align}
For $\tau=0$, the $h_{a}^\pm$ functions are linear in the separation $x$ and the prefactors of $x$ in the exponents of Eq.~\eqref{eq:densDensSol} and Eq.~\eqref{eq:pairPairSol} are quadratic forms of the different coupling constants $\lambda^{\pm}_{\uparrow,\downarrow}$. It is possible to find couplings such that correlations grow exponentially in the spatial separation $x$ unless these quadratic forms are negative semi-definite. An unbounded growth is unphysical so it is interesting to investigate whether it appears for any setup.

In the introduction we mentioned that quantum critical fluctuations might induce instabilities. A treatment of a theory with an instability by expansion around the naive vacuum where the expectation values of all fields of Eq. \eqref{eq:action} are 0 will then be incorrect. Excitations should instead be considered from the true vacuum, which breaks some of the symmetries of the action. We do not attempt to describe the true ground state but instead study expectation values obtained by expansion around the naive vacuum. Although these expectation values can not be trusted if an instability is present, they can show inconsistencies, like the above unbounded growth, that indicate the presence of an instability. Specifically, we study how the correlation functions we calculated in the previous section behaves at large spatial separations. The correlators are expected to decay at a certain rate with separation and when this is not the case, we attribute this to an instability.

With that as motivation, we study the behaviors of these exponents in the long distance limit in this section. We do that at the QCP and away from it, both at $T=0$ and $T>0$. We start out by only considering fermions coupled to one order parameter fluctuation field so we drop the index $a$ for now.

Before calculating the $h^\pm$ functions in different limits we give a physical interpretation of them and find some of their general properties. Looking back at where these functions show up we see that $h^+$ captures processes where the fermion exchanges bosons with fermions within the same patch of the Fermi surface. The fermion self-energy is obtained from this type of process so quasi-particle smearing will be governed by the behavior of $h^+$. $h^-$ captures processes where a fermion exchanges bosons with a fermion in an opposite patch of the Fermi surface. This can give rise to the attractive glue needed for e.g. pairing.

Using that $D(\omega,k_x,k_y)$ is positive and even in energy and momentum we have
\begin{align}
h^\pm(-\tau,-x)&=h^\pm(\tau,x)\label{eq:hpropFirst}\\
h^+(-\tau,x)&=\overline{h^+(\tau,x)}\\
h^-(-\tau,x)&=h^-(\tau,x)\\
h^-(\tau,x)&\leq0\\
|h^+(\tau,x)|&\leq |h^-(\tau,x)|\label{eq:triangle}
\end{align}
where we used the triangle inequality to move the absolute value inside the integral of Eq.~\eqref{eq:hDef} for the last inequality. This last identity will turn out to be important, it means that the interactions between opposing patches in a sense is stronger than the effects of quasi-particle smearing.
\subsection{The quantum critical point}
We now consider the theory at the quantum critical point where $T=r=0$. We can write the correlation functions as
\begin{align}
\langle \rho_\sigma(0,0)\rho_\sigma(\tau,\mathbf{x})\rangle&=
\frac{N_f k_F}{4 \pi^3  x} \Bigg[\frac{ \sin (2 k_F x) \exp \left(\bm{\lambda}_\sigma^T \bm{A} \bm{\lambda}_\sigma\right)}{v_F^2\tau^2+x^2}
+\frac{v_F^2\tau^2-x^2}{(v_F^2\tau^2+x^2)^2}\Bigg]\label{eq:QCPcorrelators1}
\\
\langle b^\dagger(0)b(\tau,\mathbf{x})\rangle&=
\frac{N_fk_F }{8 \pi^3 x} \Bigg[ \frac{\exp \left(\bm{\lambda}^{+T} \bm{B} \bm{\lambda}^+\right)
+
\exp \left(\bm{\lambda}^{-T} \bm{B} \bm{\lambda}^-\right)}{v_F^2\tau^2+x^2}
\nonumber\\
&\quad
-i
\frac{1}{(x+iv_F\tau)^2} \exp \left(-2 i k_F x-({\lambda^-_{\downarrow}}+{\lambda^-_{\uparrow}})^2 h^+(\tau ,-x)\right)
\nonumber\\
&\quad
+i
\frac{1}{(x-iv_F\tau)^2}\exp \left(2 i k_F x-({\lambda^+_{\downarrow}}+{\lambda^+_{\uparrow}})^2 h^+(\tau ,x)\right)
\Bigg]\label{eq:QCPcorrelators2}
\end{align}
where
\begin{align}
\bm{A}&= \begin{bmatrix}
-h^+(\tau ,x)       & h^-(\tau ,x) \\
  h^-(\tau ,x)      & -\overline{h^+(\tau ,x)}
\end{bmatrix},\ \ 
\bm{\lambda}_\sigma= \begin{bmatrix}
\lambda^{+}_{\sigma}        \\
\lambda^-_{\sigma}      
\end{bmatrix}\\
\bm{B}&= \begin{bmatrix}
-h^+(\tau ,x)       & -h^-(\tau ,x) \\
  -h^-(\tau ,x)      & -\overline{h^+(\tau ,x)}
\end{bmatrix},\ \ 
\bm{\lambda}^\pm= \begin{bmatrix}
\lambda^{\pm}_{\uparrow}        \\
\lambda^\mp_{\downarrow}      
\end{bmatrix}\label{eq:AB}
  \end{align}
We diagonalize $\bm{A},\bm{B}$ to find out whether they are positive-/negative- or indefinite. The eigenvalues of the above matrices are both are given by
\begin{align}
A_{1,2}=B_{1,2}=\pm \sqrt{h^-(\tau ,x)^2-\im(h^+(\tau ,x))^2}-\re(h^+(\tau ,x))
  \end{align}
By using Eq.~\eqref{eq:triangle} we find that
\begin{align}
A_1=B_1&\geq0\\
A_2=B_2&\leq0
\end{align}
The exponents are thus indefinite quadratic forms of the couplings, regardless of the form of $D(\omega,\mathbf{k})$. This means that it is, for both correlators, possible to find couplings $\lambda^\pm_\sigma$ such that the exponent is positive, regardless of the form of $D(\omega,\mathbf{k})$. For a scale-free $D(\omega,\mathbf{k})$ at the critical point we can find the dependence of the separation by a simple scaling argument. Instead of considering the uncorrected Green's function of Eq.~\eqref{eq:Ddef}, we might as well consider a generalization with critical exponent $\eta$:
\begin{align}
D(\omega,k_x,k_y)=\frac{1}{(\omega^2+c^2k_x^2+c^2k_y^2)^{(3-\eta)/2}}
\end{align}
By rescaling the momentum integrals we have that
\begin{align}
h^\pm(s\tau,sx)=s^{2-\eta}h^\pm(\tau,x).
\end{align}
We find that there are couplings such that the exponents of Eq.~\eqref{eq:densDensSol} and Eq.~\eqref{eq:pairPairSol} grow unboundedly in the separation for all critical exponents $\eta<2$. This leads to unphysical correlation functions that similarly grow indefinitely in the separation. In the case of the uncorrected Green's function for $\phi$ in our action we have $\eta=1$.
This unbounded growth of correlations is interpreted as an instability of the theory.
We discuss this in more detail in subsection \ref{sec:ssb} and in section \ref{sec:specific} where we consider a type of QCP at a time.

Now we investigate for what couplings we find instabilities. Considering only a spatial separation, we have
\begin{align}
|\lambda^+_\sigma+\lambda^-_\sigma|&<R|\lambda^+_\sigma-\lambda^-_\sigma|
\implies \text{unbounded }\langle\rho_\sigma\rho_\sigma(x)\rangle\text{ growth}\nonumber\\
|\lambda^+_\uparrow-\lambda^-_\downarrow|&<R|\lambda^+_\uparrow+\lambda^-_\downarrow|
\implies \text{unbounded }\langle b^\dagger b(x)\rangle\text{ growth}\nonumber\\
|\lambda^+_\downarrow-\lambda^-_\uparrow|&<R|\lambda^+_\downarrow+\lambda^-_\uparrow|
\implies \text{unbounded }\langle b^\dagger b(x)\rangle\text{ growth}\label{eq:QCPregions}
\end{align}
where
\begin{align}
0<R\equiv\sqrt{\frac{h^-(0 ,x)+h^+(0 ,x)}{h^-(0 ,x)-h^+(0 ,x)}}\label{eq:Rdef}.
\end{align}
Note that $R$ is independent of $x$ at the QCP.
 
We have not yet considered the oscillating part of the pair-pair correlation function. To do this we need to calculate $h^+_\sigma$ to find its sign. We do this for a spatial separation $x$ and find that it is positive everywhere, see Fig.~\ref{fig:hphmQCP} (result presented in Appendix \ref{app:hqcp}). Since it is positive we find that the last two terms of Eq.~\eqref{eq:QCPcorrelators2} decay for all couplings. We only considered spatial separations here but for $\tau\neq0$ it is possible to obtain an $h^+(\tau,x)$ that grows in $\tau$. However, this is harder to interpret since $\tau$ is Euclidean time. It seems likely that when continued to real time this once again is negative but that is something left for the future to be studied.
\begin{figure}
\centering
\includegraphics{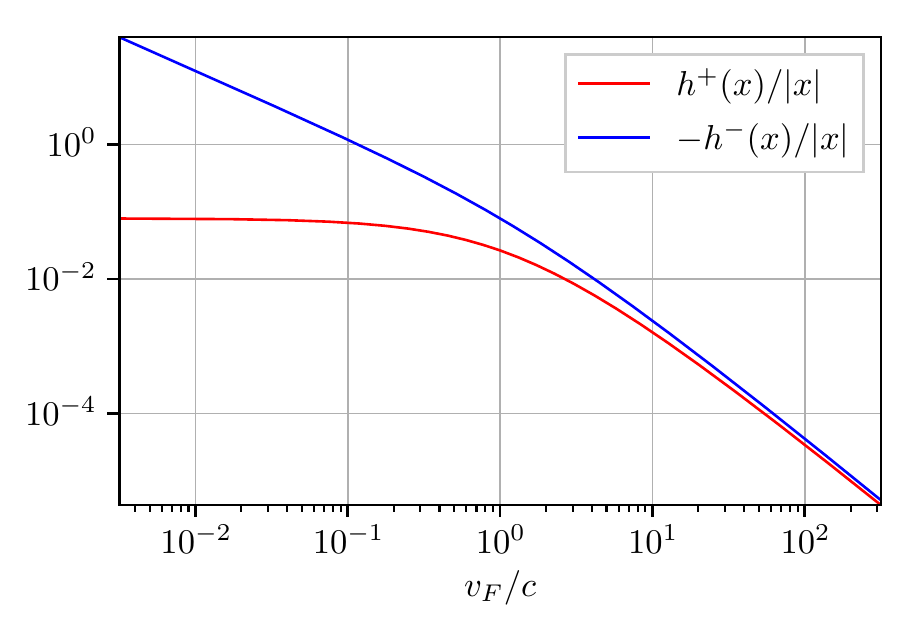}
\caption{The $h^\pm(\tau=0,x)$ are linear in $x$ at the critical point using the uncorrected $\phi$ Green's function of Eq.~\eqref{eq:Ddef}. This figure shows the prefactor of $x$ for different ratios of the Fermi velocity to the $\phi$ velocity $c$.}
\label{fig:hphmQCP}
\end{figure}

\subsection{Zero temperature - away from criticality}
We now continue to work at $T=0$ but away from the critical point. We consider $D(\omega,\mathbf{k})$ of Eq.~\eqref{eq:Ddef} with a finite gap $r>0$ and once again calculate the above considered correlation functions at large spatial separations to see whether they show signs of instabilities. We have additional symmetry at the point where $c=v_F$ (see Chapter 5 of \cite{thesis}) and we can calculate $h^\pm$ explicitly there. We find
\begin{align}
h^+(0,x)&=\frac{2v_F \exp\left(-\frac{\sqrt{r} |x|}{v_F}\right)(v_F+\sqrt{r}|x|)+rx^2-2v_F^2}{8\pi v_F^2r^{3/2} x^2}\\
h^-(0,x)&=\frac{\text{Ei}\left(-\frac{\sqrt{r} |x|}{v_F}\right)-\log\left (\frac{\sqrt{r} |x|}{v_F}\right)-\gamma }{4 \pi
    v_F^2 \sqrt{r}}
\end{align}
where $\mathrm{Ei}$ is the exponential integral and $\gamma$ is Euler's constant. Although not immediately evident from these expressions, for small $x$ this behaves as in the critical case and $h^{\pm}$ are linear in $|x|$. This linear growth slows down at a separation $|x|\sim v_F/\sqrt{r}$ and in the case of $h^{+}$ it approaches a constant while for $h^-$ it crosses over to a logarithmic growth. These asymptotic behaviors have been calculated for a general $v_F/c$ (see Appendix \ref{app:hgap}):
\begin{align}
h^+(\tau,x)=&\text{ finite}\text{ as $x\rightarrow\infty$}\\
h^-(\tau,x)=&-\frac{\log \left(r \left(\frac{x^2}{v_F^2}+\tau ^2\right)\right)}{8 \pi  c \sqrt{r} v_F}+\text{finite}\text{ as $x\rightarrow\infty$}.
\end{align} The $h^-$ function will thus dominate the exponents of Eq.~\eqref{eq:densDensSol} and \eqref{eq:pairPairSol} for large separations. This gives corrections to the power-law of the free theory, $x^{-3}$, at long distances:
\begin{align}
\langle\rho_\sigma\rho_\sigma(\tau,x)\rangle&=
C_1N_fk_F\frac{  \sin (2 k_F x) }{x}\left(v_F^2\tau^2+x^2\right)^ {-\frac{ {\lambda^-_\sigma} {\lambda^+_\sigma}}{4 \pi  c v_F \sqrt{r} } - 1}+\nonumber\\
&\quad+
N_fk_F\frac{v_F^2\tau^2-x^2}{4 \pi ^3 x(v_F^2\tau^2+x^2)^2}+
\text{subleading}\label{eq:rrPowerCorr}\\
\langle b^\dagger b(\tau,x)\rangle&=
C_2N_fk_Fx^{-1}\left(v_F^2\tau^2+x^2\right)^{\frac{\lambda^-_\downarrow \lambda^+_\uparrow }{4 \pi  cv_F \sqrt{r} }-1}+\nonumber\\
&\quad+C_3N_fk_Fx^{-1}\left(v_F^2\tau^2+x^2\right)^{\frac{\lambda^-_\uparrow \lambda^+_\downarrow }{4 \pi  cv_F \sqrt{r} }-1}+\nonumber\\
&\quad
-i
C_4\frac{N_fk_F  \exp(-2 i k_F x)}{x(x+iv_F\tau)^2}+i
C_5\frac{N_fk_F \exp(2 i k_F x)}{x(x-iv_F\tau)^2}
+
\text{subleading}\label{eq:bbPowerCorr}
\end{align}
The prefactors $C_{i}$ are given by the asymptotic value of $h^+$ and the finite part of $h^-$ and depend on $\lambda_\sigma^{\pm,2}, v_F, c, r$. Note that these expressions apply to any gapped boson propagator under the constraints mentioned in Appendix \ref{app:hgap}. Interestingly, the corrected exponents may become larger than $1/2$ meaning that the correlation functions grow unboundedly for large separations, just as at the QCP. The growth is now a power-law with an exponent that approaches $\infty$ as we approach the QCP where $r=0$. 
The conditions for unbounded growth are:
\begin{align}
 {\lambda^-_\sigma} {\lambda^+_\sigma}<-6 \pi  cv_F \sqrt{r}&\implies \text{unbounded }\langle\rho_\sigma\rho_\sigma(x)\rangle \text{ growth}\\
{\lambda^-_\downarrow} {\lambda^+_\uparrow} \text{ or } {\lambda^-_\uparrow} {\lambda^+_\downarrow} >6 \pi  cv_F \sqrt{r}&\implies \text{unbounded }\langle p p(x)\rangle \text{ growth}
\end{align}
These are not the same conditions as what we had at the QCP in the limit $r\rightarrow0$, see Figure \ref{fig:instRegions}.
\begin{figure}
\hfill
\subfigure[Unbounded growth in $\langle\rho_\sigma\rho_\sigma \rangle$]{\includegraphics{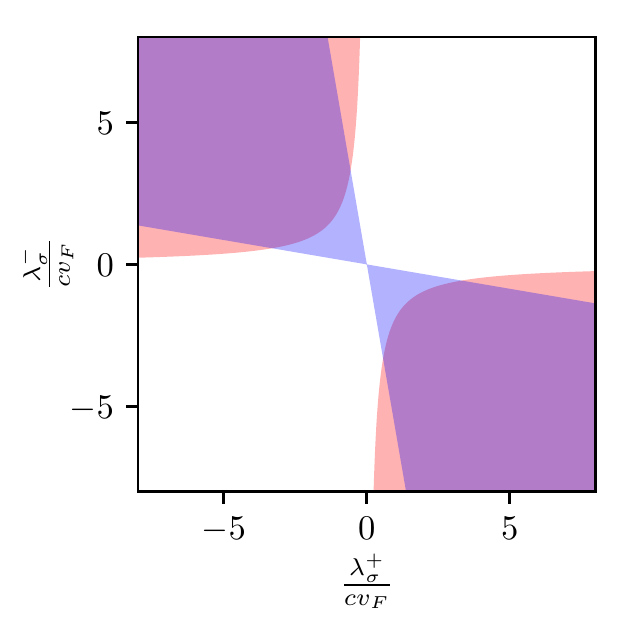}}
\hfill\hfill
\subfigure[Unbounded growth in $\langle b^\dagger b \rangle$]{\includegraphics{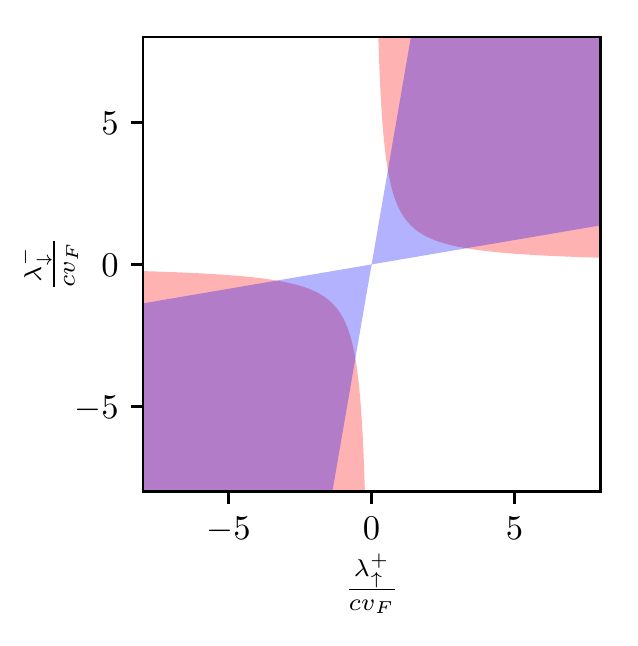}}\hfill
\caption{The colored regions indicate couplings for which the corresponding correlator grows unboundedly in the separation at $T=0$. The plots show this both at the QCP (blue) and for $\sqrt{r}=0.1$ (red). Note that the red regions move closer to the origin as $r\rightarrow0$ and thus cover the whole quadrants in the critical limit. Here we have used $c=v_F$ and thus $R=1/\sqrt{2}$.}
\label{fig:instRegions}
\end{figure}
The difference can be understood as the $|\mathbf{x}|\rightarrow\infty$ and $r\rightarrow0$ limits not commuting. The $h^+$ function becomes important when taking $r\rightarrow0$ first and affects the large $|\mathbf{x}|$ behavior. However, it is neglected when considering $|\mathbf{x}|\rightarrow\infty$ at any finite $r$ since $h^+$ approaches a constant (that diverges as $r\rightarrow0$).

The correlation functions do not have to grow unboundedly to indicate instabilities. The static pair and density susceptibilities can be calculated through a Fourier transform of the real space correlation functions where we neglect small separations and only consider $x>\Lambda, \Lambda\gg1/k_F$. The static susceptibilities give the linear respons to sources for a pair or density operator. A diverging static susceptibility indicates a finite response without a source and thus an instability. First let us consider the $T=0$ static susceptibilites of the free theory. Performing the $\tau$ and angular integral we have
\begin{align}
\chi_{\rho,0}(\mathbf{q})&=\int\d\tau\d^2 x\e^{i\mathbf{q}\cdot\mathbf{x}}\langle \rho(0)\rho(\tau,\mathbf{x})\rangle=\frac{N_fk_F}{2\pi v_F}\int_\Lambda^\infty\d x \frac{J_0(q x)\sin(2k_F x)}{x}+\text{finite}\\
\chi_{b,0}(\mathbf{q})&=\int\d\tau\d^2 x\e^{i\mathbf{q}\cdot\mathbf{x}}\langle b^\dagger(0)b(\tau,\mathbf{x})\rangle=\frac{N_fk_F}{4\pi v_F}\int_\Lambda^\infty\d x \frac{J_0(q x)}{x}+\text{finite}
\end{align}
where $J_0$ is the zeroth Bessel function of the first kind.
The density susceptibility is finite for all $q$ whereas the pair susceptibility diverges at $q=0$. There is an instability at $T=0$ in the free theory leading to BCS superconductivity when we add any finite attractive four-Fermi interaction.
Let us now see how interactions in the $N_f\rightarrow0$ limit change this. We only consider the large $x$ asymptotics of the integral to see whether it diverges in the IR so we can use Eq.~\eqref{eq:rrPowerCorr}--\eqref{eq:bbPowerCorr}. We find that the density susceptibility diverges at $q=2k_F$ when
\begin{align}
 \sqrt{r}<\sqrt{r_c}\equiv-\frac{{\lambda^-_\sigma} {\lambda^+_\sigma}}{\pi cv_F}.\label{eq:rc}
\end{align}
This means that whenever ${\lambda^-_\sigma} {\lambda^+_\sigma}<0$, there is a critical gap $0<r<r_c$ for the field $\phi$ where an instability is seen in the density correlator at wavevector $Q=2k_F$. Considering the pair-susceptibility, we find that the instability of the free theory is cured when both of these two conditions are satisfied:
\begin{align}
{\lambda^-_\uparrow} {\lambda^+_\downarrow}&<0\\
{\lambda^-_\downarrow} {\lambda^+_\uparrow}&<0.
\end{align}
We have a faster divergence in the opposite case where the left hand sides are positive and the integral diverges algebraically instead of logarithmically in the IR.

\subsection{Spontaneous symmetry breaking\label{sec:ssb}}
In the previous subsections we found combinations of parameters where correlations of operators at different points grow unboundedly in their separation, and in a broader region, diverging static susceptibilities.
We interpret the divergences as instabilities towards states where the corresponding operators have infinite expectation values. In practice, this is cut off by higher order terms that we have omitted in the action, but which become important for large values of the fields. The true ground states of the physical theories we are interested in are expected to instead be states where these operators have finite expectation values. We can not find the exact ground states without considering these higher order terms. The typical approach is to use mean-field theory to study the ground state. We leave that for future work and instead make the following natural guesses:
\begin{align}
&\langle b(x)\rangle=A\exp^{i\theta}\ \ \text{ for pairing susceptibility divergence}\label{eq:bbEV}\\
&\langle \rho_\sigma(x)\rangle\sim B\sin(2k_F\mathbf{x}\cdot\hat{n}+\theta)\ \ \text{ for charge/spin susceptibility divergence}\label{eq:rhorhoEV}
\end{align}
where $0<A, 0<\theta<2\pi$. The first is a pairing state that spontaneously breaks the $U(N_f)$ symmetry whereas the second is a charge or spin order that breaks translation symmetry.
The form of the density fluctuation is only symbolic, we expect oscillations around the momentum $\sim2k_F$ but how this happens in practice depends on the angular dependence of the coupling and what higher order terms stop the infinite growth of the correlator. We consider the expected ground states in more detail for specific systems in Section \ref{sec:specific}.

\subsection{Finite temperature\label{sec:finiteT}}
The unbounded growths of correlations at $T=0$ are interpreted as instabilities towards symmetry-breaking ground state. However, spontaneous symmetry breaking of continuous symmetries is not allowed at finite temperature in two dimensions for short range interactions by the Hohenberg-Mermin-Wagner \cite{PhysRev.158.383} theorem. The interactions mediated by $\phi$ are short range for any finite $r$ so we do not expect correlations to grow in the separation at finite temperatures away from $r=0$. Fluctuations in the parameter $\theta$ in \zeqref{eq:bbEV} and \zeqref{eq:rhorhoEV} are the Goldstone modes and these proliferate at finite temperature destroying any long-range order. However, it could still be possible to find quasi-long-range order such as in the two-dimensional XY model below the Berezinskii-Kosterlitz-Thouless transition \cite{Berezinski}. A phase with quasi-long-range order has correlators that decay with a power-law of the separation at finite temperature.

Now we consider the long distance behavior of the pair and density correlation functions at a finite temperature $T$ to verify lack of spontaneous symmetry breaking and to see whether they exhibit quasi-long-range order. The $h^\pm$ functions are calculated through a Matsubara sum at finite temperature. It is useful to consider the zeroth Matsubara frequency ($n=0$) and the rest of the sum ($n\neq0$) separately. This is also done for the self-energy in \cite{dell2006fermi, PhysRevB.102.045147, PhysRevX.10.031053} where these contributions are labeled NFL/quantum and thermal/classical, respectively. We define:
\begin{align} 
h^\pm(\tau,x)=h^\pm_{n=0}(x)+h^\pm_{n\neq0}(\tau,x).
\end{align} 
Note that the functions $h^\pm_{n=0}$ and $h^\pm_{n\neq0}$ individually satisfy the properties in Eq.~\eqref{eq:hpropFirst} to \eqref{eq:triangle} and that
\begin{align} 
h_{n=0}^+(x)&=-h_{n=0}^-(x).\label{eq:hpmfiniteT}
\end{align} 
Additionally, using
\begin{align} 
0\leq\int \frac{\d k_y}{2\pi}D(\omega_n, k_x, k_y)\leq\frac{1}{2c|\omega_n|},
\end{align}
we find that
\begin{align} 
|h_{n\neq 0}^\pm(\tau, x)|&\leq\frac{1}{24v_F c T}.\label{eq:hpmfiniteTfinite}
\end{align}
The $n\neq0$ contributions will thus not be very important since they are finite for large $x$.
We can express the thermal $h^\pm_{n=0}(x)$ in terms of a Meijer $G$-function, see Appendix \ref{app:hfiniteT}. They diverge logarithmically for small $r$:
\begin{align} 
h_{n=0}^\pm(\tau, x)&=
\mp \frac{T x^2 }{8 \pi  c^2 v_F^2}\log \left(\frac{r x^2}{c^2}\right) +\text{finite}.
\end{align}
This is an IR divergence. The same divergence shows up at each order in perturbation theory and can be seen by calculating the fermion self-energy perturbatively (see Figure \ref{fig:selfEnergy}):
\begin{figure}
\centering
\includegraphics[width=.3\textwidth]{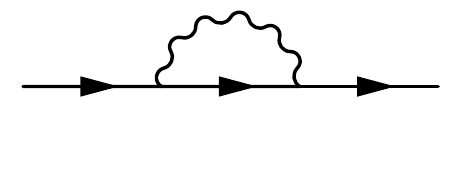}
\caption{The single diagram contributing to the fermion self-energy at second order in the coupling.}
\label{fig:selfEnergy}
\end{figure}
\begin{align}
\Sigma_{\lambda^2}(\omega_n,k_x)&=\frac{T\lambda^2}{2 \pi  c}\sum_{\omega_m}\frac{\cos ^{-1}\left(\frac{c}{v_F}\frac{ i k_xv_F+\omega_m +\omega_n}{\sqrt{r+\omega_m ^2}}\right)}{ \sqrt{
   c^2 (ik_xv_F+\omega_m  +\omega_n)^2
   -v_F^2 \left(r+\omega_m ^2\right)
  }}
\end{align}
The $\omega_m=0$ contribution diverges logarithmically as $r\rightarrow0$. Despite these IR divergences at each order, once all perturbative corrections are summed up into the exponentials of \zeqref{eq:densDensSol} and \eqref{eq:pairPairSol}, we find that the correlation functions do not diverge in the $r\rightarrow0$ limit. This can be seen by consider the matrices $\bm{A}, \bm{B}$ as in \zeqref{eq:QCPcorrelators1} and \eqref{eq:QCPcorrelators2}. Using \zeqref{eq:hpmfiniteT} and \zeqref{eq:hpmfiniteTfinite} we can write
\begin{align}
\bm{A}&=h_{n=0}^+(\tau ,x) \begin{bmatrix}
-1       & -1 \\
 -1      & -1
 \end{bmatrix}+\text{IR finite}
 \\
\bm{B}&=h_{n=0}^+(\tau ,x)
\begin{bmatrix}
-1      &\ \ 1 \\
\ \ 1      & -1
 \end{bmatrix}+\text{IR finite}\label{eq:IRdivMatrix}
\end{align}
Now both $\bm{A}$ and $\bm{B}$ are negative semidefinite (in the limit $r\rightarrow0$) so, regardless of the couplings, the IR divergence does not make these correlators diverge but may suppresses components of them to 0.
We have thus found that correlation functions diverge as we take $r\rightarrow0$ at any finite order in perturbation theory. However, as we sum up all diagrams in the $N_f\rightarrow0$ limit, we can safely take $r\rightarrow0$ and the considered correlation functions remain finite. This is similar to what was found in \cite{PhysRevB.96.144508}. They study the same model (albeit at matrix large $N$) but in $d=3-\epsilon$ dimensions. Similarly they find an IR divergence in the form of a diverging decay rate of electrons, in a perturbative treatment. The matrix large $N$ limit allows them to sum contributions at all orders and they find that this cures the IR divergence and they obtain a non-diverging fermion self-energy. A recent work has considered the same limit in $d=2$ \cite{PhysRevB.102.045147}. The IR divergence is then more severe and they find that it is not cured by summation of rainbow diagrams. Instead they argue that a $\phi^4$ interaction that is irrelevant at $T=0$ becomes relevant at finite $T$ and there contributes a mass to the boson that cures the IR divergences. It is interesting that summation of rainbow diagrams did not cure the IR divergence while the $N_f\rightarrow0$ sum of diagrams cured it without any corrections to the boson. Their non-cured IR-divergence manifests itself as a diverging fermion self-energy. We note that if we would calculate the fermion Green's function to leading order for $x\gg k_F^{-1}$ at finite $T$, this would be suppressed by $\exp(-\lambda^2 h^+)$ to identically 0 as $r\rightarrow0$ and once Fourier transformed also result in a diverging self-energy. However, since the leading term in the large $xk_F$ expansion is suppressed, one has to include subleading terms (which are not necessarily suppressed to 0) before Fourier transforming and calculating the self-energy. This means that whether our fermion self-energy is finite or not depends on the Fermi surface curvature and cannot be found in the linearized patch we have used, but which is also used in \cite{PhysRevB.102.045147}. So to conclude, up to the approximations we have both made, our results agree on a diverging Fermion self-energy, but we expect this to simply be caused by the linearized patch in our case and it does not manifest itself in the real space observables we consider.

Let us now depart from the critical, but finite temperature, theory and consider our correlators at a finite $r$ and $T$ and a large separation ($x\gg c/\sqrt{r}$ and $x\gg v_F/T$). The denominators in \zeqref{eq:densDensSol} and \eqref{eq:pairPairSol} are given by hyperbolic functions at finite $T$. This gives a decay of $\exp(-2\pi T x/v_F)$ at large separations but interactions may give corrections to this. The $h^\pm$ functions behave as
\begin{align}
h^{\pm}(\tau, x)=&\pm \frac{T|x|}{4 c v_F^2 \sqrt{r} }+\text{finite}
\end{align}
for large $|x|$. We now have
\begin{align}
\bm{A}&=-\frac{  T|x|}{4   c v_F^2\sqrt{r}} \begin{bmatrix}
1       & 1 \\
 1      & 1
 \end{bmatrix}+\text{finite at large $x$}
 \\
\bm{B}&=-\frac{T|x|}{4  c v_F^2\sqrt{r}}
\begin{bmatrix}
1      & -1 \\
 -1      & 1
 \end{bmatrix}+\text{finite at large $x$}
\end{align}
These are once again negative semidefinite and proportional to $x$. This means that interactions can only contribute to a shorter decay length at finite $T$ except for the exceptional cases of interactions in the null spaces of $\bm{A}$ and $\bm{B}$ where it is unchanged. The long distance behavior of these correlators does thus always decay exponentially in the spatial separation. The static susceptibilities are then also finite since the $\tau$ integral is over a finite interval at finite $T$. No instabilities are present at finite $T$ and since correlators decay exponentially in the separation, they also do not show quasi-long-range order.
Let us consider the static susceptibilities
at finite $r$ in the $T\rightarrow0$ limit. The $h^\pm$ functions behave as at $T=0$ for separations $x\ll \min(v_F,c)/T, \tau\ll 1/T$ since the temperature only affects the boundary conditions of the theory (however, we show more rigorously that the asymptotic behavior is the same as at $T=0$ up to $|x|\sim v_F/T$ in Appendix \ref{app:hfiniteT}). The correlators are exponentially suppressed for large spatial separations as $\exp(-2\pi x T/v_F)$ or faster and the time direction is periodic with periodicity $1/T$. This means that the finite temperature effectively introduces cut-offs in the Fourier transforms of the $T=0$ static susceptibilities. We consider an explicit cut-off at $\sqrt{x^2/v_F^2+\tau^2}<\Lambda\equiv \epsilon/T$ where $\epsilon$ is small such that the $T=0$ correlators in \zeqref{eq:rrPowerCorr} and \zeqref{eq:bbPowerCorr} can be used up to this cut-off. The Fourier transform is largest at momenta where oscillations cancel, this happens at $q=2k_F$ in the density susceptibility case and for $q=0$ in the pair susceptibility case meaning that any potential divergences show up first at these momenta. Performing the cut-off Fourier transform at these momenta we find:\begin{align}
\chi_{\rho_\sigma,\Lambda}(q=2k_F)&
\sim
\begin{cases}
\text{finite}, \text{   for } \lambda_\sigma^-\lambda_\sigma^+ > -\pi v_Fc\sqrt{r} \\
\log(\Lambda), \text{   for } \lambda_\sigma^-\lambda_\sigma^+ = -\pi v_Fc\sqrt{r} \\
 \Lambda^{-\frac{\lambda_\sigma^-\lambda_\sigma^+}{2\pi v_Fc\sqrt{r}}-\frac{1}{2}}, \text{   for } \lambda_\sigma^-\lambda_\sigma^+ < -\pi v_Fc\sqrt{r}
\end{cases}
\\
\chi_{b,\Lambda}(q=0)&\sim
\begin{cases}
\text{finite}, \text{   for } \lambda_\uparrow^-\lambda_\downarrow^+ < 0 \\
\log(\Lambda), \text{   for } \lambda_\uparrow^-\lambda_\downarrow^+ = 0 \\
 \Lambda^{\frac{\lambda_\uparrow^-\lambda_\downarrow^+}{2\pi v_Fc\sqrt{r}}}, \text{   for }\lambda_\uparrow^-\lambda_\downarrow^+ > 0.
\end{cases}
\end{align}
By finite we mean terms that are bounded as $T\rightarrow0$. We expect the contribution from outside the cut-off to behave the same way since the integral is bounded in the $\tau$ direction and exponentially decays in the spatial directions. The full susceptibilities then behave like this:
\begin{align}
\chi_{\rho_\sigma}(q=2k_F)&
\sim
\begin{cases}
\text{finite}, \text{   for } \lambda_\sigma^-\lambda_\sigma^+ > -\pi v_Fc\sqrt{r} \\
\log(T), \text{   for } \lambda_\sigma^-\lambda_\sigma^+ = -\pi v_Fc\sqrt{r} \\
 T^{\frac{\lambda_\sigma^-\lambda_\sigma^+}{2\pi v_Fc\sqrt{r}}+\frac{1}{2}}, \text{   for } \lambda_\sigma^-\lambda_\sigma^+ < -\pi v_Fc\sqrt{r}
\end{cases}
\\
\chi_{b}(q=0)&\sim
\begin{cases}
\text{finite}, \text{   for } \lambda_\uparrow^-\lambda_\downarrow^+ < 0 \\
\log(T), \text{   for } \lambda_\uparrow^-\lambda_\downarrow^+ = 0 \\
 T^{-\frac{\lambda_\uparrow^-\lambda_\downarrow^+}{2\pi v_Fc\sqrt{r}}}, \text{   for }\lambda_\uparrow^-\lambda_\downarrow^+ > 0
\end{cases}\label{eq:bsus}
\end{align}
for low temperatures. The susceptibilities can easily be calculated numerically as well, see Figure \ref{fig:lindhard} for the charge/spin susceptibility at different momenta.
\begin{figure}
\centering
\subfigure{\includegraphics[width=.499\textwidth]{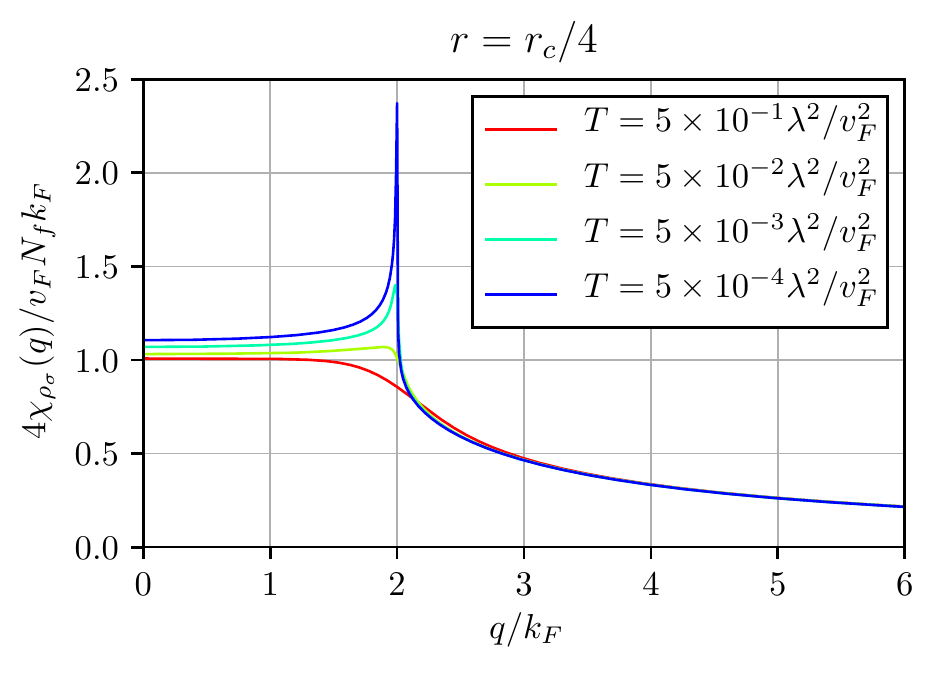}\label{fig:linhard1}}\hfill
\subfigure{\includegraphics[width=.499\textwidth]{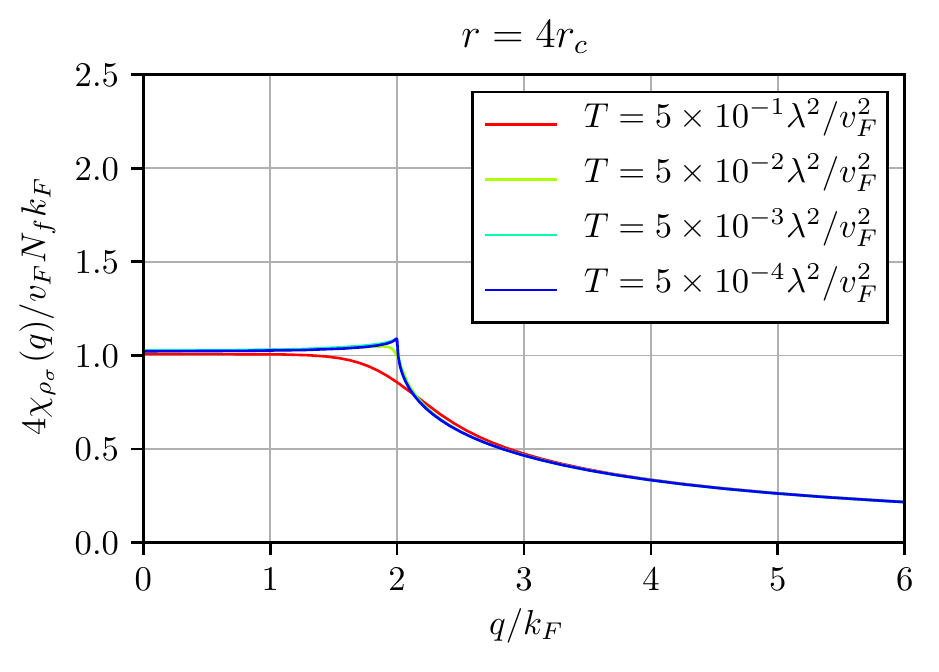}\label{fig:linhard2}}
\caption{These figures show the static charge (or magnetic) susceptibility against momentum for $r$ in the region with an instability at $T=0$ (left) and outside this region (right). The susceptibility has been calculated by a Fourier transform of \eqref{eq:densDensSol} where the $h^\pm$ functions have been obtained numerically. Here we consider the somewhat artificial case of $\lambda_\sigma^+=-\lambda_\sigma^-\equiv \lambda$ in all directions such that we have a rotationally symmetric density correlator. We further use $k_F=5\lambda^2/v_F^3$ and $c=v_F$. We see that the susceptibility diverges upon lowering temperature for $r<r_c$ and approaches a constant for $r>r_c$.\label{fig:lindhard}}
\end{figure}

\section{Landau-damping corrections\label{sec:LD}}
The $N_f\rightarrow0$ limit removes the corrections from the fermions onto the field $\phi$ such that it is still Gaussian and can be integrated out. However, we can add some corrections to the $\phi$ two-point function without making it non-Gaussian. These are important corrections to the IR at finite $N_f$ so being able to add these is interesting because it gives us an indication as to how $N_f$ corrections modify the results away from $N_f=0$.

In addition to the $N_f\rightarrow0$ limit have additionally considered the limit of large separations compared to $1/k_F$, which equivalently can be viewed as a large $k_F$ limit. The suppressed fermionic loops each come with a factor of $N_fk_F$ so these two limits do not commute and we cannot remove these loops if we take the limits in the opposite order. For full generality we may take these limits simultaneously keeping $N_fk_F$ constant and thus obtaining a new parameter. This was done in \cite{paper2} and the limit was shown (with some caveats concerning the order of other limits) to suppress all symmetrized fermionic loops with more than two vertices, order by order in perturbation theory, and thus allowing us to still integrate out the order parameter fluctuations. A note should be made here: this limit gives a fermion with $k_x\sim\omega^{2/3}$ and when calculating non-planar diagrams that appear at order $\lambda^4$ and higher, it is then necessary to keep the Fermi surface curvature term $-\beta k_y^2$ in the fermion propagator when the external energy is small $\omega\lesssim \beta^3 N_f^2k_F^2\lambda^4/v_F^4c^4\equiv \omega_{\beta}$ \cite{PhysRevLett.97.226403}. For our spherical Fermi surface we have $\beta=v_F/(2k_F)$ and thus
\begin{align}
\omega_{\beta}\sim \frac{N_f^2\lambda^4}{c^4v_Fk_F}
\end{align}
This energy scale vanishes in the double limit we consider, so while our results are valid in this limit, one should note that the full theory contains more fermion loops, but also curvature corrections to the non-planar diagrams below this energy scale. Note that this scale is also much smaller than the scale set by the coupling, $\omega_B\ll \lambda^2/c^2$ when $\lambda^2/c^2 \ll v_Fk_F$, also for $N_f=1$. The double limit thus removes corrections from fermion loops with more than 2 vertices, and suppresses the energy scale where non-planar diagrams receive curvature corrections.

Fermion loops are also suppressed in the matrix large $N$ limit studied in several works mentioned in the introduction. A similar double limit reintroducing some of the effects from fermion loops was studied in \cite{PhysRevLett.123.096402}.

We would like to consider the same $N_f\rightarrow0,\ k_F\rightarrow\infty,\ N_fk_F=\text{constant}$ limit as in \cite{paper2} but for the systems and correlation functions we are studying here. In this work we allow the coupling function to depend on momentum and we consider finite temperatures so we need to verify that the same suppression of fermionic loops with more than two vertices remains. This can actually already be seen concerning corrections to the $\phi$ two-point function from our above calculated density correlator. The field $\phi$ couples to the fermion densities so the amputated $\langle\phi\phi\rangle$ correlator is given by $\langle\rho\rho\rangle$. From \zeqref{eq:densDensSol} we see that $\langle\rho\rho\rangle$ is composed of a term with small momenta (compared to $k_F$) and a term with momenta of order $k_F$. Only the former term is relevant since the low-energy $\phi$ modes have $Q=0$. This term is independent of the couplings and is thus simply what is obtained from the 2-vertex fermion loop. To verify suppression of higher $n$-point $\phi$ corrections in this limit we explicitly calculate such fermion loops in Appendix \ref{app:cancellation}.

Having verified this cancellation we may consider what this limit entails for the correlation functions we are interested in. The earlier strict $N_f\rightarrow0$ limit lets us use the free boson correlation function when evaluating the $h^\pm$-functions. Now we should calculate the $h^\pm$ functions with a $\phi$ two-point function that is corrected be 2-vertex fermion loops, see Fig.~\ref{fig:LD}.
\begin{figure}
\centering
\includegraphics[width=.35\textwidth]{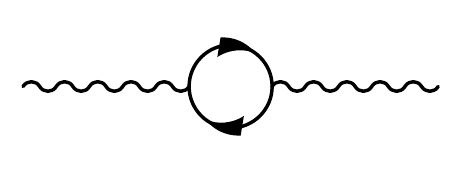}
\caption{A two-vertex fermion loop giving corrections to the field $\phi$.}
\label{fig:LD}
\end{figure}
The $N_f\rightarrow0$ limit was also used to suppress diagrams of the type in Fig.~\ref{fig:contract2} contributing to the density-density correlator. Those diagrams come with an extra factor of $k_F$ as opposed to Fig.~\ref{fig:contract1} and can seemingly not be neglected in the double limit we now consider\footnote{I thank Chris Hooley and Andriy Nevidomskyy for a discussion where this came up.}. The diagrams of the type in Fig.~\ref{fig:contract2} can not simply be found by using the background field Green's function since we can only calculate it for insertions separated a distance $x\gg1/k_F$ and here we would need to have two insertions at the same point. We thus need to evaluate these diagrams some other way. Luckily, this class of diagrams turns out to be manageable. Let us first consider such diagrams in momentum space, with a large momentum incoming to the density vertex $q\gtrsim k_F$. The momenta going through the bosons connecting the two fermion loops are now of order\footnote{This is true at any finite order in perturbation theory. At large perturbative order $n$ the large momentum is spread among many boson exchanges and non-perturbative effects may allow such processes as is discussed in Chapter 1.3.8 of \cite{thesis}. Now we nevertheless work in the less attractive limit as in \cite{paper2} where we consider $k_F$ large before we sum all diagrams and this is not an issue.} $k_F$. The propagators are of the form $1/(\omega^2+c^2q^2+r^2-\Pi)$ so these diagrams get a further $k_F$ suppression and need not be considered.

Next consider the case of a small momentum incoming to the density vertex, $q\ll k_F$. In the above we found that all loops with more than 2 vertices cancel to leading order in large $k_F$. This assumed all incoming momenta were small compared to $k_F$ and it is now applicable. We thus have that the only diagrams contributing to leading order are the ones with a single boson connecting the left and right fermion loop in Fig.~\ref{fig:contract2} such that both loops each have two vertices. The boson being exchanged however receives corrections from further 2-vertex fermion loops so when $q\ll k_F$ we must sum all diagrams of the form shown in  Fig.~\ref{fig:necklace}.
\begin{figure}
\centering
\includegraphics{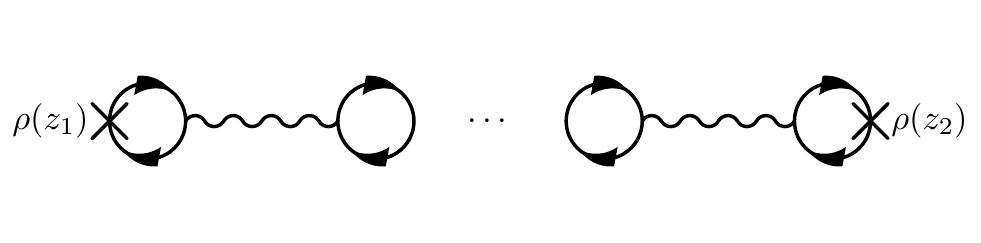}
\caption{This figure shows a class of bubble chain diagrams that contribute in the combined limit $N_f\rightarrow0$, $k_F\rightarrow\infty$ that is not captured by the above framework and has to be calculated separately. The ellipsis indicate that this class of diagrams has any number of bubbles, larger than 1.}
\label{fig:necklace}
\end{figure}
The first and last vertices are momentum independent for the density-density correlator. All other vertices come with a momentum dependent coupling constant $\lambda(\theta)$. We write the two-vertex fermion loop as $\Pi_2 [f](\omega,\mathbf{q})$ where $f_\sigma(\theta)$ is the combined momentum and spin dependence of the two vertices. We have that the summed contributions from diagrams of the type in Fig.~\ref{fig:necklace} is given by
\begin{align}
\langle\rho\rho(q)\rangle_{\text{bubbles}}=\Pi _2[1](\omega,\mathbf{q})+\Pi_2 [\lambda(\theta) ](\omega,\mathbf{q})^2\sum_{n=1}^\infty\frac{\left(\Pi_2 [\lambda(\theta)^2](\omega,\mathbf{q})\right)^{n-1}}{(\omega^2+c^2\mathbf{q}^2+r^2)^n}
\label{eq:bubbles}
\end{align}
We now need to calculate the two-vertex fermion loop $\Pi_2[f](\omega,\mathbf{q})$ both for this additional contribution to the density correlation function and to be able to calculate the $h^\pm$ functions.

$\Pi_2[f](\omega,\mathbf{q})$ was calculated at a finite temperature in $d=3$ in \cite{PhysRevB.96.144508}. They found that the result is identical to the zero temperature result and this is true also in $d=2$. Only fermions close to the Fermi surface contribute for $\phi$ momenta small compared to $k_F$ and we can approximate (see Appendix \ref{app:LD}) this loop by the integral
\begin{align}
\tilde{\Pi}_2[f](\omega_n,\mathbf{q})&=-\sum_{i,\sigma}^{N_f,\{\uparrow,\downarrow\}}\frac{i k_F\omega_n}{(2\pi)^2v_F}\int\d \theta
\frac{f_\sigma(\theta)}{ i \omega_{n}-v_Fq\cos(\theta)}.
\end{align}
where $\theta$ is an angle measured from the external momentum $\mathbf{q}$ and $f_\sigma(\theta)$ is the momentum dependence of the vertices evaluated at the Fermi surface: $f_\sigma(\theta)=f_\sigma(k_F \hat{n}(\theta))$. To solve this we expand the vertex momentum dependence in Fourier series
\begin{align}
f_\sigma(\theta)=\sum_{n=0}^\infty f_{\sigma,n}\cos(n\theta)+\sum_{n=1}^\infty\tilde{f}_{\sigma,n}\sin(n\theta)
\label{eq:expansion}
\end{align}
 Calculating the sums and integrals (see Appendix \ref{app:LD}) we have
\begin{align}
\tilde{\Pi}_2[f](\omega_n,\mathbf{q})=-2N_f\frac{k_F|\omega_n|}{2 \pi 
   v_F\sqrt{q^2 v_F^2+\omega_n^2}}\sum_{\sigma,m=0}^\infty f_{\sigma,m}\left(\frac{-i\sgn(\omega_n) qv_F}{\sqrt{q^2 v_F^2+\omega_n^2}+|\omega_n|}\right)^m
   \label{eq:2vertexSol}
\end{align}

Now we consider the boson self-energy and replace $f_\sigma(\theta)=\lambda_\sigma^2(\theta)$. The self-energy dominates over the $\omega^2$ in the (critical) bosonic propagator and the low-energy scaling is
\begin{align}
\omega N_fk_F\lambda^2\sim c^2v_F^2q^3.
\label{eq:scaling}
\end{align}
The fraction in the power can then be replaced by $-i\sgn(\omega_n)$ in this limit. We then find 
\begin{align}
\tilde{\Pi}_2(|\omega_n|\ll qv_F)&=-2N_f\frac{k_F|\omega_n|}{2 \pi 
   v_F^2q}\sum_{\sigma,m=0}^\infty\lambda^{(2)}_{\sigma,m}\left(-i\sgn(\omega_n)\right)^m
     \\
   &=N_f\frac{k_F}{\pi 
   v_F^2q}
\sum_{\sigma}\left   (-
|\omega_n|\frac{\lambda_{\sigma}(\frac{\pi}{2})^2+\lambda_{\sigma}(-\frac{\pi}{2})^2}{2}
   +i\omega_n
\pvint_0^{2\pi} \d \theta    \frac{\lambda_{\sigma}(\theta)^2}{2\pi \cos(\theta)}  
   \right)\label{eq:simPol}
\end{align}
Assuming $\lambda_{\sigma}(\theta)$ is parity symmetric or antisymmetric we find exactly the result of \cite{metlitski1}: to leading order the boson only couples to fermions with perpendicular momenta. Interestingly, for a non-centrosymmetric theory where $\lambda_{\sigma}(\mathbf{k})^2\neq\lambda_{\sigma}(-\mathbf{k})^2$, we find that this is not the case and there is an extra contribution from the boson coupling to fermions all around the Fermi surface given by the principal value integral in \zeqref{eq:simPol}. This extra contribution is odd in both frequency and momentum (the angle $\theta$ was defined w.r.t. the incoming momentum $\mathbf{q}$). Analytically continuing this to Minkowski signature we find that it is not a damping term but instead it contributes a singular real part to the polarization. We are not aware if this extra term appearing in non-centrosymmetric systems has been studied before but it is outside the scope of this paper so it is left for future studies. We only consider centrosymmetric systems henceforth.

Now that we have calculated the boson self-energy we first use this to calculate the $\langle\rho\rho(q)\rangle_{\text{bubbles}}$ contribution (see \zeqref{eq:bubbles}) to the density-density correlation function. \zeqref{eq:bubbles} is written in momentum space but we are interested in correlators in real space at large separations and thus Fourier transform this. We stay away from criticality and consider distances longer than $c/\sqrt{r}$. We do this in Appendix \ref{app:necklace} (for a constant $\lambda(\theta)=\lambda$) and find:
\begin{align}
\langle\rho\rho(\tau,x)\rangle_{\text{bubbles}}=\mathcal{O}\left({\frac{1}{(x^2+v_F^2\tau^2)^{3/2}}}\right)
\end{align}
for large $x$ and $\tau$ and this contribution does thus not affect any conclusions regarding instabilities so we need not consider this more.

Now we use a corrected $D$ when calculating $h_{\mathrm{LD}}^{\pm}$:
\begin{align}
D_{\mathrm{LD}}(\omega_n,\mathbf{q})=\frac{1}{\omega_n^2+c^2q^2+r-\Pi_2(\omega_n,k)}
\end{align}
We use the subscript $\mathrm{LD}$ to indicate that $D$ and $h^\pm$ have been calculated with the above Landau-damping correction. This calculation is performed in Appendix \ref{app:hLD}. We calculate the long distance behavior of $h_{\mathrm{LD}}^\pm(\tau=0,x)$ to see if the instabilities found in the previous section have changed. Away from criticality ($r>0$), at both $T=0$ and $T>0$ we find that only the finite part of $h_{\mathrm{LD}}^\pm(\tau=0,x)$ has changed compared to $h^\pm(\tau=0,x)$. The growth at long distances is the same and the conclusions about instabilities are thus unchanged by the inclusion of resummed 1-loop Landau-damping corrections. However, right at the QCP where $r=T=0$ we find that the $h_{\mathrm{LD}}^\pm(\tau=0,x)$ functions no longer grow linearly in the separation but now grow with a power-law:
\begin{align}
h^+_{\mathrm{LD}}(0,x)=&- \left| x\right| ^{1/3}\frac{\Gamma \left(\frac{2}{3}\right)}{3 \sqrt{3} \pi  M_D^{2/3} v_F^{4/3}}+\text{finite}\label{eq:LDexpp}\\
   h^-_{\mathrm{LD}}(0,x)=&-3h^+_{\mathrm{LD}}(0,x)+\text{finite}.\label{eq:LDexpm}
\end{align}
Note that there are now two scales at the QCP: $\lambda^2$ and $M_D$ so the $h_{\mathrm{LD}}^\pm(\tau=0,x)$ functions only have this behavior asymptotically. Since the $h_{\mathrm{LD}}^\pm$ functions both have the same power-law growth we have to consider them both when considering instabilities. Analogously to the non-Landau damped case we find that unbounded growth of correllators is governed by \zeqref{eq:QCPregions} and $R$ is now independent of all parameters and given by $R=1/\sqrt{2}$.

\section{Results for specific systems\label{sec:specific}}
In the following subsections we use the above results to study specific fermionic systems near QCPs to obtain their $N_f\rightarrow0$ phase diagrams. We start each subsection with a brief description of the quantum phase transition and how its fluctuations couple to the fermions, i.e. what the structure of $\lambda_\sigma(\theta)$ is. We finally also consider another system that can be described by the same model, fermions coupled to an emergent gauge field. While the study of these models is motivated by experimental systems, it is not very useful to make a direct comparison with our results since we consider very simplified models of them. Instead we focus on some earlier theoretical results on the same and similar models when making comparisons.

\subsection{Charge nematic transition}

In this subsection, we consider an $l=2$ charge nematic quantum critical point such as the ones that have been observed in the Cu- \cite{lawler2010intra,daou2010broken} and Fe- \cite{kuo2016ubiquitous,Chu710,PhysRevLett.112.047001} based SCs. These QCPs are due to electronic instabilities towards a state where the Fermi surface shape breaks the $C_4$ rotational symmetry of the lattice. The resulting nematic state has a remaining $C_2$ rotational symmetry and symmetry under spin inversion. The order parameter is a single real scalar field that is odd under $\pi/2$ rotations. This means that $\lambda_\sigma(\mathbf{k})$ is independent of $\sigma$ and has $d$-wave symmetry. In the analysis of the previous sections we considered correlators of fermion bilinears separated spatially in a chosen direction $\hat{n}$. For a particular $\hat{n}$ we then have $\lambda_{\uparrow}^+=\lambda_{\downarrow}^+=\lambda_{\uparrow}^-=\lambda_{\downarrow}^-\equiv\lambda_{\hat{n}}$ where $\lambda_{\hat{n}}$ has $d$-wave symmetry and thus is 0 in four different directions, the so-called cold spots.
For $T=0$ and $r>0$, we find from \zeqref{eq:rrPowerCorr} that nematic fluctuations suppress the charge and spin correlators at long distance:
\begin{align}
\langle \rho_\alpha(0,0)\rho_\alpha(0,x\hat{n})\rangle
&=
C_1N_fk_F\sin (2 k_F x) \left(\frac{1}{x}\right)^{3+\frac{\lambda_{\hat{n}}^2}{2\pi  cv_F \sqrt{r} }}-
\frac{ N_fk_F}{4 \pi ^3x^3}+\text{subleading}\label{eq:rrINlarger3}
\end{align}
This is identical to what happens near a ferromagnetic QCP so we defer the discussion of this to the next subsection.
From \zeqref{eq:bsus} we find that the nematic fluctuations lead to enhanced pairing fluctuations as $T=0$ is approached, most strongly near the QCP:
\begin{align}
\chi_{b}(q=0)&\sim
 T^{-\frac{\lambda_{\hat{n}}^2}{2\pi v_Fc\sqrt{r}}}
\end{align}
Instead of a logarithmic growth of fluctuations, now fluctuations grow with a power-law as $T\rightarrow0$ and furthermore, the exponent diverges as the QCP is approached. As noted in the previous section, these results are unchanged by the inclusion of Landau-damping corrections. Note that we have kept the boson gap fixed here, other works that do not simply postulate the quantum critical action, but derive it from a model with a Pomeranchuk instability, find a temperature dependent gap \cite{frank2020omega, dell2006fermi} and thus a more complex temperature dependence here.

While we have considered $s$-wave pairs, this result does not necessarily mean that the ground state is an $s$-wave SC. We could similarly consider e.g.~$p$ and $d$-wave pairs that contain spatial derivatives and different spin structures. This can be investigated within the framework we have used here but due to spatial derivatives breaking rotational symmetry it will be a longer calculation that we leave for the future. Instead, we make an educated guess of the result. The spin is unimportant in this case since $\lambda_{\uparrow}(\mathbf{k})=\lambda_{\downarrow}(\mathbf{k})$. The conclusions regarding instabilities come from the asymptotic behavior of the correlation functions. We have derivatives acting on the background field Green's function of \zeqref{eq:bbGbg} for non-$s$-wave pair correlators. The derivative may hit any of the factors in \zeqref{eq:Gsaddlepoint}. Hitting the exponent $-isk_F|x_{12}|$ only brings down a constant and the resulting term does thus have the same asymptotics as for the $s$-wave case. While it is possible some cancellation occurs for e.g. $p$-wave, this is not expected to be universal and happen for states of all symmetries apart from $s$-wave. This indicates that whether the pairing instability we found here results in $s$-wave or a gap of some other symmetry is something we cannot tell using this analysis. It seems likely several of these correlators diverge at the same points in the phase diagram and the groundstate symmetry is determined by the next to leading order $N_f$ and $1/k_F$ corrections and higher order terms in the action.
Since the coupling $\lambda_{\hat{n}}$ has a $d$-wave form factor and thus is 0 on four nodes at the Fermi surface, there will be directions in which the instabilities are not seen. This allows for nodes or lukewarm regions along the Fermi surface where the gap is 0 or small and thus compatible with a $d$-wave state. 

We have found that nematic fluctuations lead to enhanced superconductivity in the $N_f\rightarrow0$ limit but the symmetry of the gap is not definite. This is consistent with studies of the charge nematic transition in two dimensions in the matrix large-$N$ limit \cite{PhysRevB.91.115111} where nematic fluctuations enhance SC but also there the symmetry of the SC groundstate is non-universal. The matrix large-$N$ limit has also been employed on the charge nematic transition in $3-\epsilon$ dimensions \cite{PhysRevB.88.125116, PhysRevB.92.205104} where $\epsilon$ is small. There the order of the $\epsilon\rightarrow0$ and $N\rightarrow\infty$ limits decide whether the system is a SC or not \cite{PhysRevB.92.205104}.
\cite{PhysRevLett.114.097001} studies the same model of a nematic QCP (at $N_f=1$) as us and additionally include four-fermion interactions. By considering small interactions and staying a small distance away from the QCP they can integrate out the nematic fluctuations and get a correction to the four-fermion interaction, essentially staying within a Fermi liquid framework. They solve the gap equation and find that this correction gives an increase in $T_c$ compared to the BCS case as criticality is approached. Like us, they find that nematic fluctuations increase susceptibility towards pairing of both $s$-wave and $d$-wave symmetry and find that the bare four-fermion interaction is what decides the symmetry of the resulting superconducting state and not the near-critical nematic fluctuations.

The authors of \cite{PhysRevB.98.220501} include one-loop self-energy corrections to both the boson and the fermions and argue that this allows them to access the strongly coupled regime at the QCP. The order parameter field is static in their case but they are otherwise in the same limit as we have assumed; they consider $l=\lambda^2/v_Fk_F$ a small parameter. They find that nematic fluctuations lead to superconductivity, with a finite $T_c$, and they also find that the SC symmetry is non-universal.

Finally we mention that the two-dimensional charge nematic transition has been studied using sign-problem-free determinant quantum Monte-Carlo (DQMC). This was done by putting fermions on a finite lattice and coupling them to pseudospin-1/2 degrees of freedom \cite{PhysRevX.6.031028,Lederer4905}. The spins have a nematic QCP whose fluctuations give strong corrections to the fermions. The first work does not find any pairing instability however they find large pairing fluctuations. In the second study the authors increase the coupling between the spins and the fermions beyond what is physical in a microscopic interpretation of the model and they instead regard it as an effective model. They then find a superconducting dome covering the QCP with $s$-wave symmetry in the symmetric phase.

As opposed to these earlier studies, we only find a pairing instability at $T=0$. However, we believe this is expected due to the Mermin-Wagner theorem. By departing from the strictly two-dimensional case and including four-fermion interactions at finite $N_f$ we expect an increase in $T_c$ compared to that of BCS theory since we find an increase in pairing susceptibility due to nematic fluctuations. This is something we leave for future works to study.

\subsection{Ferromagnetic and spin nematic transition}
Ferromagnetic and spin nematic ordering both break spin-inversion symmetry. Electrons spontaneously order their spins in the same direction in a ferromagnetic phase.

The spin-nematic phase is similar to the charge-nematic case, the Fermi surface spontaneously breaks rotational symmetry, but in perpendicular directions for spin up and down. In fact, all of the orders considered here can be seen as coming from Pomeranchuk instabilities. Ferromagnetic ordering is simply the Fermi surface of one spin spontaneously becoming larger than for the opposite spin.

There will be soft fluctuations in the order parameter in the case where these transitions are continuous. The fluctuations couple with opposite sign to fermions of opposite spin, $\lambda_{\uparrow}(\mathbf{k})=-\lambda_{\downarrow}(\mathbf{k})$, for both the ferromagnetic and the spin-nematic transition. However, $\lambda_{\uparrow}(\mathbf{k})$ has $s$-wave symmetry in the former case and $d$-wave symmetry in the latter. For a chosen spatial direction $\hat{n}$ we then have $\lambda_{\uparrow}^+=\lambda_{\uparrow}^-=-\lambda_{\downarrow}^+=-\lambda_{\downarrow}^-\equiv\lambda_{\hat{n}}$ for the coupling of both ferromagnetic and spin-nematic fluctuations. Their effects on the fermions will thus be the same when studied using the framework of this paper, with the caveat that in the spin-nematic case there are four coldspots in $\lambda_{\hat{n}}$ and none in the ferromagnetic case.

Plugging in the coupling function in \zeqref{eq:QCPregions}, we do not find any instabilities at the QCP. In fact, we find that components of the leading small $N_f$ CDW/SDW fluctuations and pairing fluctuations receive faster power-law fall-offs at $T=0$ due to interactions:
\begin{align}
\langle \rho_\alpha(0,0)\rho_\alpha(0,x\hat{n})\rangle
=&
C_1N_fk_F\sin (2 k_F x) \left(\frac{1}{x}\right)^{3+\frac{\lambda_{\hat{n}}^2}{2\pi  cv_F \sqrt{r} }}-
\frac{ N_fk_F}{4 \pi ^3x^3}+\text{subleading}\label{eq:rrINlarger2}
\\
\langle b^\dagger(0)b(0,x\hat{n})\rangle
=&
2N_fk_FC_2\left(\frac{1}{x}\right)^{3+
\frac{\lambda_{\hat{n}}^2}{2 \pi  cv_F \sqrt{r} }}+2N_fk_F\im\left(C_4\frac{\e^{-i2 k_F x}}{x^3}\right)+\text{subleading}\label{eq:bbINlarger2}
\end{align}
This means that the pair susceptibility is now finite at $T=0$ (see \zeqref{eq:bsus}) and the divergence in the free theory is cured by interactions with the ferromagnetic or spin-nematic fluctuations. We thus have a naked QCP with strong NFL behavior but suppressed pairing and suppressed Friedel oscillations.

At finite temperatures we find that the correlations lengths $l_{\hat{n}}$ of the above components receiving corrected power-laws also decrease due to interactions (see Eq.~\ref{eq:densDensSol}--\ref{eq:pairPairSol}):
\begin{align}
l_{\hat{n}}^{-1}=\frac{T}{v_F}\left(2\pi+\frac{\lambda_{\hat{n}}^2}{c v_F \sqrt{r}}\right)\label{eq:corrLength}.
\end{align}
We have been considering the leading contribution to the correlators at $x\gg1/k_F$. With these corrections due to interactions we find a faster decay in separation $x$ for some of the terms above so it is possible that the subleading behaviors that we have not considered are actually dominating these. To be perfectly consistent we should then remove these terms above and regard them as part of the neglected terms that are subleading at long distances. The leading long-range behavior does thus not necesarily behave as described by Eq.~\eqref{eq:rrINlarger2}--\eqref{eq:corrLength} at large separations. The above asymptotic result should be taken with caution and conservatively viewed as simply a cancellation of the diagrams at leading order for large $x$.

Fermions in 2D coupled to ferromagnetic fluctuations were studied in \cite{PhysRevX.7.031058} using DQMC. The authors consider a system with two flavors (orbitals) of spinful fermions and find spin triplet SC tendencies. By adding an orbital index to the fermions we may also consider orbit-singlet spin-triplet pairing instead of the spin-singlet pairing studied so far in this work. We define the spin-triplet pair operators
\begin{align}
b_{t,\sigma}=\psi_{1,\sigma}(x)\psi_{2,\sigma}(x)
\end{align}
We can calculate the pair correlator the same way as in the case of the spin-singlet pair, the only difference is that we never encounter $\lambda_{\sigma}(\mathbf{k})$ with mixed spins, the result is the same as what is found for the spin-singlet pair correlator near the nematic QCP studied in the previous subsection. This means that in a two-orbital system coupled to ferromagnetic fluctuations, we also find a pairing instability in the spin-triplet channel in the $N_f\rightarrow0$ limit.

\subsection{Circulating current transition}
The circulating current phase proposed by Simon and Varma \cite{PhysRevLett.89.247003} breaks lattice $C_4$ rotational symmetry and time-reversal symmetry. It requires a two component order parameter $\phi_a=(\phi_x,\phi_y)$ that couples to the fermions through a coupling function with $p$-wave symmetry independently of spin: $\lambda_{a,\sigma}(\mathbf{k})\propto\mathbf{k}_a$ \cite{metlitski1}. As we only consider two antipodal patches in the direction $\hat{n}$ at a time we may simply consider a single $\phi$ corresponding to $\phi_a$ projected in the $\hat{n}$ direction. We then have $\lambda_{\uparrow}^+=\lambda_{\downarrow}^+=-\lambda_{\uparrow}^-=-\lambda_{\downarrow}^-\equiv\lambda_{\hat{n}}$ and no cold-spots despite the $p$-wave symmetry since we actually have a two-component order parameter. Again considering \zeqref{eq:QCPregions}, we find that fluctuations at the circulating current QCP lead to an instability. This time in the charge/spin density channel at momentum $Q=2k_f$. Departing from the QCP but still at $T=0$, we find that the divergence in the charge and spin susceptibilities persists up to a critical order parameter fluctuation gap $r_c$, see \zeqref{eq:rc}:
\begin{align}
r_c=\frac{\lambda_{\hat{n}}^4}{\pi^2 c^2v_F^2}.
\end{align}
Density fluctuations diverge upon approaching the $T=0$, $0<r<r_c$ interval either from finite temperatures or larger gaps. A phase diagram is shown in Fig.~\ref{fig:ccpd}. The interval terminates at a new QCP where there now are large fluctuations at wave-vector $Q=2k_F$. These fluctuations may give rise to important finite $N_f$ corrections, something we leave for the future to consider.

 As in the above ferromagnetic case, we find that $s$-wave pairing is suppressed and there is no superconducting instability at $T=0$. In the ferromagnetic case this was due to the coupling acting with different signs on the electrons with opposite spins making up a Cooper pair and we saw that it is still possible to have triplet pairing in the ferrromagnetic case. In this case the coupling is independent of spin, but it couples with opposite sign to fermions on opposite sides of the Fermi surface and the coupling does not induce triplet pairing in this case.

\begin{figure}
\centering
\begin{tikzpicture}[scale=1.15]

\begin{scope}[]
\fill[color=green,scope fading=north] (1.5,0) rectangle (3.5,0.7);
\end{scope}

        \draw [<->,thick,>=latex] (0,4) node (yaxis) [above] {$T$}
        |- (7,0) node (xaxis) [right] {$r$};
      \draw [thick,dashed] (1.5,0) to (1.5,4);
      
        \coordinate (qcp) at (1.5,0);
\coordinate (qcp3) at (3.5,0);

\begin{scope}[]
\fill[color=blue,scope fading=east] (3.5,-.05) rectangle (7,0.05);
\end{scope}
\fill[color=yellow] (1.5,-.05) rectangle (3.5,0.05);
\fill[red] (qcp) circle (2.5pt) node[below,align=center,color=black] {CC QCP\\$r=0$}; 
\fill[orange] (qcp3) circle (2.5pt) node[below,align=center,color=black] {Induced QCP\\$r=\frac{\lambda_{\hat{n}}^4}{\pi^2 v_F^2c^2}$};
\node[below,align=center,color=black] at (.75,4){$\langle\phi\rangle\neq0$}; 
\node[below,align=center,color=black] at (2.25,4){$\langle\phi\rangle=0$}; 
\node[below,align=left,color=black] at (6,4){
}; 

\draw [->,thick] (3.75,2.0) to (2.5,0.3);
\node[above,align=center,color=black] at (3.75,2.0){$\chi_{\rho}(q=2k_F)
\sim T^{\frac{1}{2}-\frac{\lambda^2}{2\pi v_Fc\sqrt{r}}}$};
         
\draw [->,thick] (6,.75) to (5.5,0);
\node[above,align=center,color=black] at (6,.75){
$\langle\rho_\sigma\rho_\sigma(x)\rangle\sim \sin (2 k_F x)x^ {\frac{\lambda_{\hat{n}}^2}{2 \pi  cv_F \sqrt{r} }-3}$
};
\end{tikzpicture}
\caption{Sketch of the $N_f\rightarrow0$ phase diagram in the vicinity of a circulating current (CC) QCP (red dot). The ordered phase (negative $r$) is not considered. Order parameter fluctuations near the CC QCP lead to an instability towards forming either charge- or spin-density waves at $T=0$. This happens for an interval (yellow) of fluctuation gaps terminating at $r=r_c$ where there is an induced QCP (orange dot). Approaching this interval from above leads to a diverging susceptibility indicated in the green region. \label{fig:ccpd}}
\end{figure}
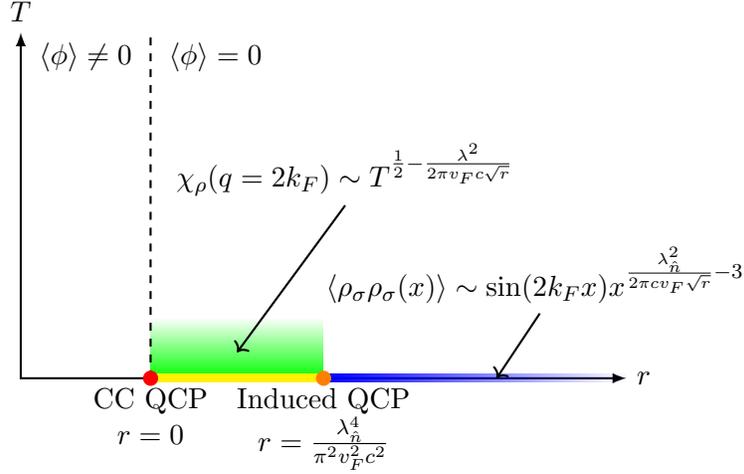

\subsection{Spin liquids: Emergent gauge field}
As mentioned in the introduction, \zeqref{eq:action} can also describe fermions coupled to a $U(1)$ gauge field $A$ which appears in models of spin liquids. The electric potential $A_0$ is screened and can be neglected \cite{POLCHINSKI1994617}. By defining $\phi_a=\mathbf{A}_a$ we can capture the interaction of the magnetic potential and the fermions through our action \zeqref{eq:action} with the following choice of coupling function
\begin{align}
\lambda_a(\mathbf{k})=e\frac{\mathbf{k}_a}{m}
\end{align}
where $e$ is the coupling strength. The longitudinal part of $\mathbf{A}_a$ is set to 0 in the Coulomb gauge ($\mathbf{k}\cdot\mathbf{A}=0$). Once we pick a direction $\hat{n}$ and only consider patches $k=\pm k_F\hat{n}$ we may thus neglect $A_{\hat{n}}$ and simply call the transverse component $\phi$.
We thus find that we have a similar coupling function as in the case of a circulating current phase transition: $\lambda_{\uparrow}^+=\lambda_{\downarrow}^+=-\lambda_{\uparrow}^-=-\lambda_{\downarrow}^-=e v_F$. Also in this case, there are no cold-spots since we picked the direction $\hat{n}$ arbitrarily. At $T=0$ we thus find that this system also has an instability towards forming charge/spin density waves around $Q=2k_F$. Now $r=0$ since $\phi$ is massless by gauge invariance and this instability is present for arbitrarily weak but non-zero interaction strength $e$. This result is opposed to what is found in the (vector) large $N_f$ limit \cite{POLCHINSKI1994617} where no antiferromagnetic instability is found at $T=0$. This model at $T=0$ has also been studied at matrix large $N$ in \cite{PhysRevB.82.045121}. The authors find that an instability at $2k_F$ in the charge density is present at large $N$, but suspect that the point of interest $N=2$ may be stable. They proceed by considering a modified boson dynamical critical exponent $z_b=2+\epsilon$ (in our model $z_b=3$) and a combined limit where both $N\rightarrow\infty$ and $\epsilon\rightarrow0$. $\epsilon N$ tunes the relative strength of the quasiparticle smearing that suppresses instabilities and what they call ``Amperean attraction" which enhances it. Taking these limits, they find that whether the theory is unstable depends on the order of the two limits. Whether the physically interesting case of $N=2$, $\epsilon=1$ is unstable then depends on what the $\epsilon$--$N$ phase diagram looks like.

Now we consider finite temperatures. The apparent IR divergence discussed in the paragraph after \zeqref{eq:IRdivMatrix} becomes relevant for this system since $r=0$. We find that the $N_f\rightarrow0$ spin liquid has the leading large distance pair correlator completely suppressed at finite temperatures because of the divergence in the $h^\pm$ functions and it is entirely composed of the subleading terms we have not considered in this work. Since the divergence is due to the $n=0$ Matsubara sum term, it is unaffected by the inclusion of Landau-damping.

The $n=0$ terms in the $h^\pm$ functions cancel out in the charge (and spin) density correlator exponent and it decays exponentially at long distances for all finite temperatures. Its precise form is found by calculating the finite piece of the $h^\pm$ functions. However, since the correlator grows exponentially in separation at $T=0$ (or as $\langle\rho\rho\rangle\sim\exp(Ax^{1/3})$ when including Landau-damping, see Eq.~\eqref{eq:LDexpp}, \eqref{eq:LDexpm}), we expect the charge and spin susceptibilities to diverge faster than a power-law as $T=0$ is approached and the $\tau, x$ cut-offs move to infinity as $v_F/T$, see end of Section \ref{sec:finiteT}.

\subsection{Multiple critical points}
We can easily consider a system close to multiple QCPs of different natures. Several types of order parameter fluctuations give corrections to the fermions that have to be treated simultaneously. As we see in Eq.~\eqref{eq:densDensSol} and Eq.~\eqref{eq:pairPairSol}, these corrections simply sum in the exponents. We have seen interaction effects both suppress and enhance the long distance correlation functions so it is possible that the instability induced at $T=0$ near one QCP is cured by the vicinity to another. Whether an instability shows up or not depends on the relative strengths of the interactions in the long distance limit. This depends on the coupling strengths that are direction dependent so we consider a direction $\hat{n}$ at a time. We find that the relevant strength of a QCP is of the form
\begin{align}
\alpha_a=\frac{\lambda_{a,\hat{n}}^2}{c_a v_F \sqrt{r_a}}\label{eq:flucStrength}
\end{align}
where $a$ refers to the index on the corresponding order parameter field $\phi_a$ and $\lambda_{a,\hat{n}}$ are the relevant components of the interaction strength as defined in the above subsections. These strengths are unaffected by the inclusion of Landau-damping corrections away from the QCP.
We find the following conditions for CDW/SDW and pairing instabilities:
\begin{align}
\alpha_{\mathrm{CC}}-  \alpha_{\mathrm{CN}}  - \alpha_{\mathrm{FM}} - \alpha_{\mathrm{SN}}&>\pi \ \implies\ \text{CDW/SDW instability}\\
\alpha_{\mathrm{CN}} - \alpha_{\mathrm{CC}} -  \alpha_{\mathrm{FM}} - \alpha_{\mathrm{SN}}&>0\ \implies\ \text{Spin-singlet pairing instability}
\end{align}
where CN, FM, SN, CC refers to charge nematic, ferromagnetic, spin nematic and circulating current transitions, respectively. The effects of the different fluctuations are as expected from the above subsections. $\alpha_a^{-1}$ can be viewed as the effective distance to the QCP labeled $a$. A three-dimensional phase diagram is shown in Fig.~\ref{fig:multiQCP} with distances to charge nematic, ferrromagnet and circulating current transitions on the axes.

Noting that the strengths $\alpha_a$, as defined, are always non-negative, we find that there is no phase competition between the pairing state and the CDW/SDW states, only one of the conditions above are satisfied at a time. However, we have only considered a single direction $\hat{n}$. The nematic fluctuations have cold spots and this means that we may have systems with phase competition. I.e.~they may show a CDW/SDW type instability when considering one part of the Fermi surface and a pairing instability when considering another part and we cannot resolve within the current framework which one is winning.

Phase non-coexistence/competition of antiferromagnetism (AF) and SC is also seen in unconventional SCs with AF $T_c$ going to 0 at the point where SC $T_c$ is optimal \cite{PhysRevLett.112.047001,norman2011challenge,stewart2011superconductivity}. While this may be indicative of AF fluctuations near an AF QCP enhancing superconductivity, we find that another possible mechanism giving a similar result may be seen for an AF phase in the vicinity of a nematic QCP. Pairing is most strongly enhanced right at the nematic QCP and at the same point we find that the exponent in the power-law decay of the normal-state AF correlator goes to $-\infty$. It is not entirely clear how the interactions giving rise to the AF phase would interact with the nematic fluctuations. However, if those interactions can be written in the form of gapped CC fluctuations, then we find from the above that they are completely outdone by the fluctuations at a nematic QCP and the AF state should disappear at or before the nematic QCP where $\alpha_{\mathrm{CN}}=\infty$.

\begin{figure}
\includegraphics[width=\textwidth]{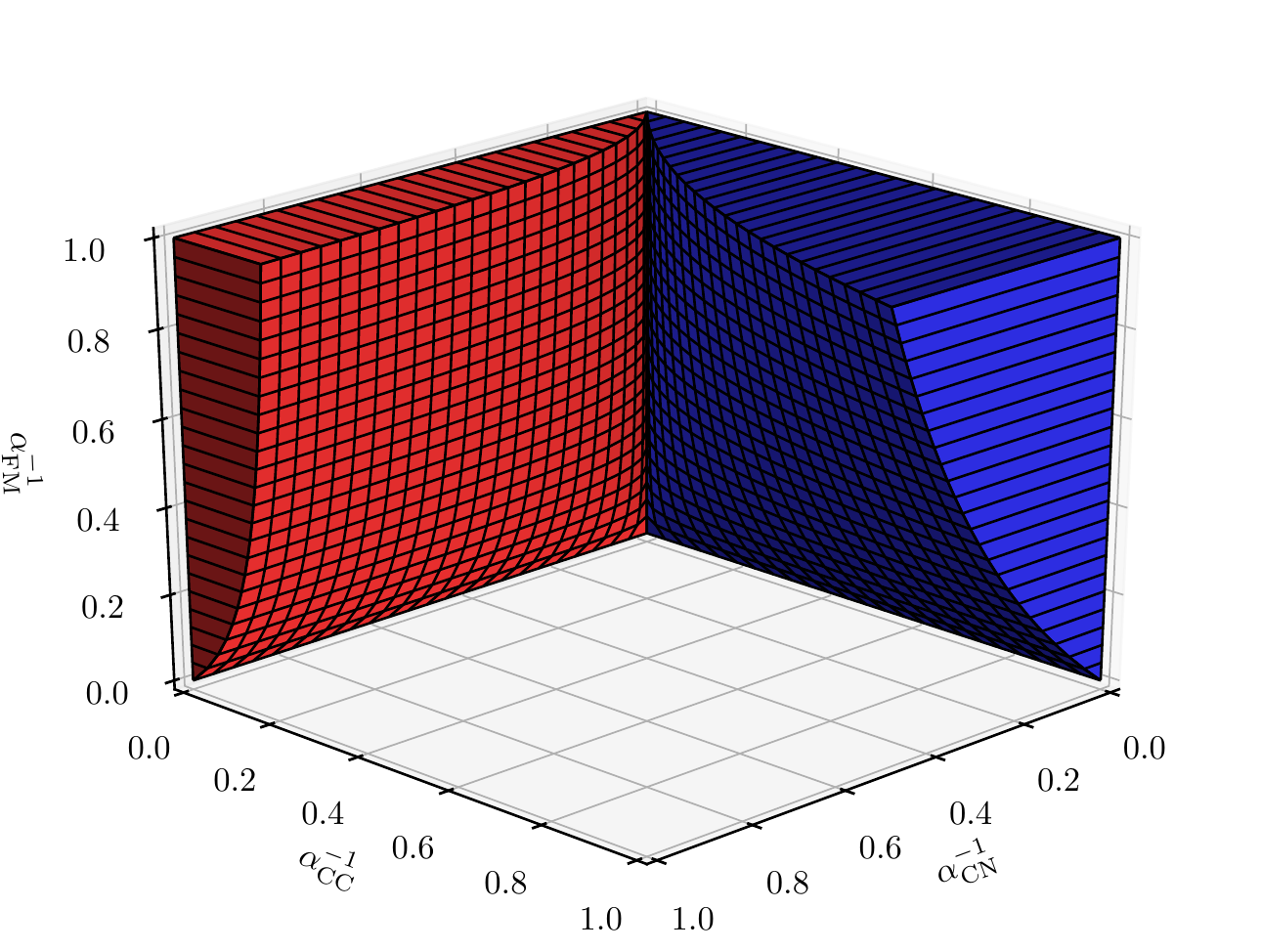}
\caption{This shows a $T=0$ phase diagram for $N_f=0$ fermions coupled to three different types of order parameter fluctuations. The three axes show the relative strength of the order parameter fluctuations as defined by Eq.~\eqref{eq:flucStrength}. The red region shows an instability towards CDW/SDW and the blue region shows an instability towards pairing. The empty region shows neither of these instabilities in the density or pairing correlation functions. The vicinity of the CC and FM QCPs inhibit pairing while the vicinity of the CN and FM QCPs inhibit the CDW/SDW instability.}
\label{fig:multiQCP}
\end{figure}

\section{Discussion\label{sec:conclusion}}
In this paper we have investigated instabilities of fermions in two dimensions coupled to order parameter fluctuations near different QCPs and emergent gauge fields, in the $N_f\rightarrow0$ limit. Taking this limit allowed us to calculate charge, spin and pair correlation functions at long distances, $x\gg1/k_F$. First we considered a general model without specializing to a particular system. While we were agnostic to the momentum dependence of the coupling functions, we assumed an otherwise rotationally symmetric Fermi surface for convenience.

We found expressions for correlators in terms of functions we label $h^{\pm}$. They are responsible for interactions between fermions within a patch, and for interactions of fermions in opposite patches, respectively. This means that $h^+$ is solely responsible for quasi-particle smearing effects while the attractive glue between fermions giving rise to pairing, CDW or SDW is generated by $h^-$. We find an inequality, \zeqref{eq:triangle}, that settles that the former effect is stronger.

We considered both gapped and critical fluctuations and zero and finite temperatures. We found that an IR divergence in perturbation theory at finite temperatures for critical fluctuations was cured by summing all diagrams in the small $N_f$ limit.

All considered correlation functions show exponential decays at large separations at finite $T$ and power-law behavior at $T=0$, for finite gaps. We found that there are cases where the powers are negative such that correlation functions show unphysical asymptotic growths at long distances, but also milder cases with power-law decay where nonetheless static susceptibilities diverge. We interpreted this as instabilities towards states where the corresponding operator condenses at $T=0$. We proceeded by calculating the behavior of susceptibilities as $T=0$ is approached from finite $T$.

Next we considered a modified boson propagator that includes the one-loop Landau-damping correction. We show that also in this more general setup, as compared to \cite{paper2}, we can interpret this as the same controlled approximation in the double limit $N_f\rightarrow0, N_fk_F=\text{const}$. We find the inclusion of fermion loops with two vertices does not change the asymptotic correlator decay lengths at finite $T$ nor the exponent in the power-law at $T=0$. However, the long distance behavior right at the QCP was shown to be affected by the inclusion of Landau-damping corrections and the constant prefactor and subleading terms are affected by Landau damping.

The diverging susceptibilities are all proportional to $N_f$ which is taken to 0 here so one may question whether the divergences indicate instabilities since we find them in the limit $N_f\rightarrow0$. We argue that for the purpose of studying the $N_f\rightarrow0$ limit to get insights about the finite $N_f$ case, these divergences should be taken seriously. Instead of defining the operators $\rho_\alpha$ and $b$ as sums over fermionic bilinear, we could have defined them as $\psi_{0,\alpha}^\dagger\psi_{0,\alpha}$ and $\psi_{0,\uparrow}\psi_{0,\downarrow}$, respectively, instead. Perhaps a bit awkward, but we could then consider susceptibilities of such an operator as a function of $N_f$ and take $N_f\rightarrow0$ to find that at least there, the susceptibility diverges. This corresponds to the susceptibility in the $N_f=1$ case diverging, when only summing diagrams without internal fermionic loops. We found that after adding all diagrams with fermionic loops with only 2 vertices to this, the divergence of the susceptibilities around (but not at) the QCP are unchanged. 
This hints that the region of diverging susceptibilities might remain at finite $N_f$. However, this is far from certain since two important effects in the finite $N_f$ theory are missing in the $N_f\rightarrow0, N_fk_F=\text{const}$ limit: Firstly, internal fermion loops with more than 2 vertices are missing. Secondly, the scaling of boson momenta obtained after adding Landau-damping means that the curvature term in the fermion propagator becomes relevant for non-planar diagrams below an energy scale $\omega_\beta$. This energy scale is supressed in the double limit, but is finite for finite $N_f$ and may change the long distance behavior of correlators. However, it is possible to suppress this effect to very long distances by having $v_Fk_F\gg \lambda^2/c^2$.

Finally, in Section \ref{sec:specific} we applied our findings to specific systems of fermions near QCPs to see whether fluctuations induce instabilities in the $N_f\rightarrow0$ limit. We considered charge nematic, ferromagnetic, spin nematic and circulating current transitions, systems close to multiple of these transitions and finally we also considered fermions coupled to an emergent gauge field.

One of the main reasons for studying this kind of model is the unexplained SC dome in the vicinity of nematic transitions in the cuprates and iron pnictides. As mentioned in the introduction, nematic fluctuations may both lead to a breakdown of quasiparticles that limit pairing and they act as an attractive force inducing pairing. Here we found that the latter effect wins and superconductivity is enhanced by nematic fluctuations in the small $N_f$ limit. Furthermore we argued that the gap symmetry is non-universal, it depends on terms neglected in the low-energy limit we are considering.

In the case of a continuous ferromagnetic or spin-nematic transition we found a naked metallic QCP in the case with one fermion orbital.
This is interesting since it admits the study of strong NFL physics all the way to $T=0$. The naked QCP is crucially single orbital and it is thus not amenable to DQMC relying on two-orbitals to cancel the fermion sign-problem. This makes an alternative approach like the small $N_f$ limit currently necessary to study this naked QCP. In the case of two orbitals we find a spin-triplet SC groundstate
as also develops in DQMC as temperature is lowered \cite{PhysRevX.7.031058}.
The results near the rotation and time-reversal symmetry breaking circulating current transition was maybe the most intriguing. Here we find a phase diagram with a SDW/CDW state induced in a finite gap interval at $T=0$ near the original QCP and suppressed spin-singlet SC.

We made some comparisons to earlier studies on the same systems. Most interesting are perhaps comparisons to DQMC results. While our results are largely consistent in what instabilities we find, it is hard to make sharp comparisons at this stage. The results found here are valid at distances $\gg1/k_F$ while the sizes of systems amenable to DQMC are still quite limited. Additionally, the instabilities we find are only at $T=0$ which is inaccessible to DQMC. One could approach $T=0$ and compare the growth of correlators but to do this we would need to match the couplings of our field theory to the interactions in the lattice models studied using DQMC. Admittedly, the $N_f\rightarrow0$ limit might be too far from the finite $N_f$ case to warrant this type of comparison. However, DQMC provides an easy way to study the $N_f\rightarrow0$ limit at finite temperature. The acceptance-rejection probability within DQMC depends on the ratio of fermion Green's function determinants and by raising it to the power of $N_f$, one can easily account for different values of $N_f$. That way it should be possible to make contact with the type of results we provide here and DQMC simulations. The benefit is two-fold, firstly it provides a non-trivial way to benchmark DQMC simulations with analytical results. While the analytic results are at $N_f=0$, it is still at a strongly interacting point not previously studied. Secondly, it gives us a way to measure the effect of tuning $N_f$ that can shed light on whether the phenomena found here are specific to $N_f=0$ or if they in some form extend to finite $N_f$.

The fact that the $k\ll k_F$ part of the $\langle \rho\rho\rangle$ correlator receives no corrections from interactions is an interesting result. This was also found in \cite{feldman} and \cite{PhysRevB.58.15449} due to cancellations upon symmetrizing fermion loops with density vertices. We have generalized that result by allowing for momentum dependent vertices and finite temperatures. This has prompted further investigations that will be presented in \cite{petterMalek}.

When adding Landau-damping corrections to the boson with momentum dependent coupling, we found that corrections from non-centrosymmmtric systems seem to disobey the earlier belief \cite{POLCHINSKI1994617, metlitski1} that only two patches are important. In this case we in fact find that fermions receive relevant corrections from bosons that have relevant corrections from fermions in all patches. This also warrants further investigation.

A surprising find is the curing of the finite temperature IR divergence in perturbation theory since this was not found to be cured in the recent matrix large $N$ treatment of the same model in \cite{PhysRevB.102.045147}. It would be interesting to understand whether the additional crossed diagrams cured the divergence, or if there is some other difference between our works.

The framework developed in \cite{paper3} and used here has proved to be quite convenient for studying quantum critical metals, albeit at $N_f\rightarrow0$. Many results can be found analytically and the calculations are not more difficult than those in one-loop perturbation theory. In particular we find the calculations easier than the much employed matrix large $N$ limit that contains a subset of the diagrams in the small $N_f$ limit. We believe it would be useful to consider this framework further to develop an understanding similar to that of perturbation theory. Specifically it would be interesting to systematically understand how properties of the boson Green's function affects the $h^\pm$ functions and what combinations of $h^\pm$ functions show up in the exponents. For all the coupling functions we considered, and for all the fermion bilinear correlators we considered, we did not find any quasi-long range order and we found that the finite temperature IR divergence was cured. While not manifest, it seems likely that these findings are more generally true, for all couplings and correlators. It would be interesting to investigate whether this is the case, and what happens for other boson Green's functions.

\section*{Acknowledgements}
The author wishes to thank Abdelmalek Abdesselam, Chris Hooley, Andriy Nevidomskyy and Johan Sch\"ott for useful discussions. This work was supported by the grant "Exact Results in Gauge and String Theories" from the Knut and Alice Wallenberg foundation.

\begin{appendix}
\section{$h^\pm$ at the QCP\label{app:hqcp}}
Calculating $h^\pm$ for $\tau=0$ at the QCP $T=r=0$ we find
\begin{align}
h^+(0,x)=&|x|\frac{ \sqrt{c^2-v_F^2}-v_F \cos ^{-1}\left(\frac{v_F}{c}\right)}{4 \pi  c (c^2-v_F^2)^{3/2}}
\\
h^-(0,x)=&-|x|\frac{ \cos ^{-1}\left(\frac{v_F}{c}\right)}{4 \pi  c v_F \sqrt{c^2-v_F^2}}
 \end{align}
for $v_F\neq c$ and for $v_F=c$ we have
\begin{align}
h^-(0,x)=&-\frac{|x|}{4\pi v_F^3}\\
h^+(0,x)=&\frac{|x|}{12\pi v_F^3}.
\end{align}
Despite not manifestly so, these functions are continuous in $v_F$ at $v_F=c$.
\section{Calculating $h^\pm$ with a gapped boson at zero temperature\label{app:hgap}}
Consider a gapped boson propagator $D$ that after integrating over $k_y$ is continuous at $\omega=0, k_x=0$:
\begin{align}
\int \frac{\d k_y}{2\pi} D(0,0,k_y)=A<\infty
\end{align}
and further decays algebraically at large $\omega$ and $k_x$. 

We want to calculate the large $\tau$, $x$ behavior of the corresponding $h^{\pm}$-functions at $T=0$. The $h^\pm$ integrals can be viewed as regularized, symmetrized Fourier transforms of the $h^\pm$ integrands without the numerator,
\begin{align}
h^\pm(\tau,x)=\lim_{\epsilon\rightarrow0}\left( \frac{\mathcal{F}_\epsilon[d^\pm](\tau,-x)}{2}+\frac{\mathcal{F}_\epsilon[d^\pm](-\tau,x)}{2}-\mathcal{F}_\epsilon[d^\pm](0,0) \right)
\end{align}
where $\epsilon$ is a cutoff that regularizes the divergence at $\omega=k_x=0$ and
\begin{align}
d^\pm(\omega,k_x)=\frac{1}{(i\omega_n-v_Fk_x)(-i\omega_n\pm v_Fk_x)}\int\frac{\d k_y}{2\pi}D(\omega,k_x,k_y).
\end{align}
The unregularized Fourier transform diverges for all $\tau$, $x$ due to the singularity at $\omega=k_x=0$ but the divergence is uniform so the subtraction of the $\tau=x=0$ component allows us to remove the regulator $\epsilon$. For any $\epsilon>0$, we have that $d^\pm$ is absolute square integrable when using the same cutoff and so is its Fourier transform by the Plancherel theorem. This means that the Fourier transform necessarily decays at large $\tau$, $x$ for any finite $\epsilon$. This means that if the $h^\pm$ functions were to not approach 0 at large $\tau$, $x$, then the behavior at infinity can be obtained from $d^\pm(\omega, k_x)$ for $\omega^2+k_x^2<\epsilon^2$ for any $\epsilon>0$. We can thus use

\begin{align}
h^\pm(\tau,x)=\int\displaylimits_{\omega^2+v_F^2k_x^2<\epsilon^2}\frac{\d\omega\d k_x}{(2\pi)^2}\frac{\cos(\omega_n\tau-k_xx)-1}{(i\omega_n-v_Fk_x)(-i\omega_n\pm v_Fk_x)}A + \mathcal{O}\left(\frac{1}{\sqrt{\tau^2+x^2/v_F^2}}\right)
\end{align}
to calculate the large $\tau$, $x$ limit for a gapped $D$ at $T=0$. We then find:
\begin{align}
h^+(\tau,x)&=\text{finite}\\
h^-(\tau,x)&=-\frac{\log(v_F^2\tau^2+x^2)}{4\pi v_F}A+\text{finite}.\label{eq:hmgap}
\end{align}
Here the finite part depends on how $D(\omega, k_x, k_y)$ behaves away from $\omega=k_x=0$.

\section{Calculating $h^\pm$\label{app:hfiniteT} at finite temperature}
We calculate the $n=0$ contribution to the $h^\pm(\tau,x)$ functions at finite temperature for $\phi$ propagator
\begin{align}
D(\omega_n,k_x,k_y)=\frac{1}{\omega_n^2+c^2k_x^2+c^2k_y^2+r}.
\end{align}
We find
\begin{align}
h^\pm_{n=0}(x)&=\mp T\int\frac{\d k_x}{2\pi}\frac{\cos(k_xx)-1}{ v_F^2k_x^2}\frac{1}{2c\sqrt{r+c^2k_x^2}}\nonumber\\
&=\mp T\frac{2+G_{1,3}^{2,1}\left(\frac{rx^2}{4c^2}|
\begin{array}{c}
 \frac{3}{2} \\
 0,1,\frac{1}{2} \\
\end{array}
\right)}{4 \pi  r  v_F^2}\\
\end{align}
where $G$ is the Meijer $G$-function.
In the large $|x|$ limit we have
\begin{align}
h^{\pm}_{n=0}(\tau, x)=&\pm \frac{T}{4 c v_F^2 \sqrt{r} }|x|+\text{finite},
\end{align}
and taking $r\rightarrow0$ we find
\begin{align}
h^\pm_{n=0}(\tau, x)&=
\mp \frac{T x^2 }{8 \pi  c^2 v_F^2}\log \left(\frac{r x^2}{c^2}\right) +\text{finite}.
\end{align}
Now we consider the double limit $c/\sqrt{r}\ll x\ll v_F/T$. Using that the $\phi$ Green's function decays monotonically in $|\omega_n|$ we can bound the $h^-$ Matsubara sum with $T=0$ integrals on both sides shifted by either an extra or a removed $h^-_{n=0}$ to obtain
\begin{align}
|h^-(\tau,x)-h^-_{T=0}(\tau,x)|<|h^-_{n=0}(x)|.
\end{align}
Further we use that
\begin{align} 
0\leq\int \frac{\d k_y}{2\pi}D(\omega_n, k_x, k_y)\leq\frac{1}{2c\sqrt{r}}
\end{align}
to find
\begin{align} 
|h^\pm_{n=0}(x)|<\frac{T|x|}{4cv_F^2\sqrt{r}}.
\end{align}
We thus find that $|h^-_{T=0}(\tau,x)|$ in \zeqref{eq:hmgap} is dominating this for $c/\sqrt{r}\ll x\ll v_F/T$ and $h^{-}(\tau,x)\rightarrow h^{-}_{T=0}(\tau,x)$ in these limits.

\section{Symmetrized loop cancellation\label{app:cancellation}}
Here we show that to leading order in large $k_F$, fermion loops with 3 or more vertices cancel upon symmetrizing the vertices. We do this at a finite temperature $T$ and allow for momentum dependent vertices. We start off with two lemmas:
\newtheorem{lem}{Lemma}[section]
\newtheorem{defi}{Definition}[section]
\begin{lem}
\label{lem:sum1}
For $n\in \NN$, and $n\geq 2$ we have
\begin{align}
\sum_{j=1}^n \prod_{i=1,i\neq j}^{n}\frac{1}{z_i-z_j}=0
\end{align}
where $z_i\in \CC$ and $z_i\neq z_j$ if $i\neq j$.
\end{lem}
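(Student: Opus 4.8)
The plan is to recognise the sum as an avatar of the partial fraction expansion of $1/P(z)$, where $P(z)=\prod_{i=1}^n(z-z_i)$. Since the $z_i$ are pairwise distinct, $P$ has only simple zeros, so it admits a decomposition $\frac{1}{P(z)}=\sum_{j=1}^n\frac{c_j}{z-z_j}$; multiplying through by $z-z_j$ and setting $z=z_j$ identifies the coefficient as $c_j=1/P'(z_j)=1/\prod_{i\neq j}(z_j-z_i)$. The first step of the proof is to record this decomposition and this formula for $c_j$.

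The key step is then to multiply both sides of the decomposition by $z$ and let $z\to\infty$. Because $\deg P=n\geq 2$, the left-hand side behaves like $z^{1-n}\to 0$, while the right-hand side tends to $\sum_{j=1}^n c_j$; hence $\sum_{j=1}^n c_j=0$. Finally, since $\prod_{i\neq j}(z_i-z_j)=(-1)^{n-1}\prod_{i\neq j}(z_j-z_i)$, each summand in the statement equals $(-1)^{n-1}c_j$, so the claimed sum equals $(-1)^{n-1}\sum_{j=1}^n c_j=0$. (Equivalently, $\sum_j c_j$ is the sum of residues of $1/P$, which vanishes by integrating over a large circle; or it is the coefficient of $z^{n-1}$ in the Lagrange interpolant of the constant function $1$ at the nodes $z_j$, which is zero because that interpolant is just $1$, of degree $0<n-1$.) If one prefers an entirely self-contained route, induction on $n$ also works, with base case $n=2$ reading $\frac{1}{z_2-z_1}+\frac{1}{z_1-z_2}=0$ and the inductive step merging the $j=n$ term with a rewriting of the other $n-1$ terms.

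Since this is an elementary algebraic identity, there is no real obstacle; the only points requiring care are the hypothesis $n\geq 2$ — which is exactly what forces $z/P(z)\to 0$, and indeed the identity fails for $n=1$ where the sum is the empty product $1$ — and the sign bookkeeping $(-1)^{n-1}$ relating $\prod_{i\neq j}(z_i-z_j)$ to $P'(z_j)$.
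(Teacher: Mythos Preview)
Your proof is correct. Your primary route---partial fractions for $1/P(z)$ followed by the limit $z\to\infty$ of $z/P(z)$---is an algebraic repackaging of the paper's argument, which simply integrates $\prod_i 1/(z-z_i)$ over a large contour and invokes the residue theorem together with the $1/z^n$ decay for $n\geq 2$. In fact your parenthetical remark (``$\sum_j c_j$ is the sum of residues of $1/P$, which vanishes by integrating over a large circle'') \emph{is} the paper's proof verbatim, so you have already subsumed it; the partial-fraction presentation has the mild advantage of making the $n\geq 2$ hypothesis and the sign $(-1)^{n-1}$ completely explicit, while the contour version is a one-liner.
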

\begin{proof}
Consider a contour $C$ at $\infty$ that surrounds all poles:
\begin{align}
\int_C \d z \prod_{i=1}^{n}\frac{1}{z-z_i}=2\pi i\sum_{j=1}^n \prod_{i=1,i\neq j}^{n}\frac{1}{z_j-z_i}
\end{align}
The contour integral is 0 for $2\leq n$.
\end{proof}
\begin{defi}
We call a set of complex numbers $Z$ nonexceptional if for every non-empty proper subset $Z_0\subset Z$ we have
\begin{align}
\sum_{z\in Z_0}^n z\neq0
\end{align}
\end{defi}
\begin{lem}
\label{lem:perm}
 \begin{align}
\sum_{\sigma \in S_n}\sum_{j=1}^n \left(\sum_{l=1}^j w_{\sigma(l)}\right)\prod_{i=1,i\neq j}^{n}\frac{1}{\sum_{l=1}^j z_{\sigma(l)}-\sum_{l=1}^i z_{\sigma(l)}}=0
\end{align}
where $w_i, z_i\in\CC$, $\{z_i\}$ is nonexceptional and $S_n$ is the symmetric group over the first $n$ positive integers.
\end{lem}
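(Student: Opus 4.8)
The plan is to reduce the identity to elementary combinatorics. Write $A^\sigma_j=\sum_{l=1}^j z_{\sigma(l)}$ and $B^\sigma_j=\sum_{l=1}^j w_{\sigma(l)}$ for the partial sums and let $I_\sigma=\sum_{j=1}^n B^\sigma_j\prod_{i\neq j}(A^\sigma_j-A^\sigma_i)^{-1}$ be the inner double sum, so the claim is $\sum_\sigma I_\sigma=0$. Nonexceptionality makes every factor meaningful: $A^\sigma_j-A^\sigma_i$ for $i\neq j$ is $\pm$ a sum over a nonempty subset of $\{z_1,\dots,z_n\}$ of size at most $n-1$, hence a proper nonempty subset sum, hence nonzero. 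Since $I_\sigma$ is linear in the vector $(w_1,\dots,w_n)$, it is enough to treat $w$ equal to a standard basis vector supported at some index $k$; then $B^\sigma_j$ is $1$ if $\sigma^{-1}(k)\le j$ and $0$ otherwise, so $I_\sigma=\sum_{j\ge \sigma^{-1}(k)}\prod_{i\neq j}(A^\sigma_j-A^\sigma_i)^{-1}$, and the condition $j\ge\sigma^{-1}(k)$ is precisely $k\in\{\sigma(1),\dots,\sigma(j)\}$.

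The analytic heart is the partial-sum identity: for $u_1,\dots,u_s\in\CC$ all of whose prefix sums in every order are nonzero,
\begin{align}
\sum_{\rho\in S_s}\ \prod_{m=1}^s\frac{1}{u_{\rho(1)}+\dots+u_{\rho(m)}}=\frac{1}{u_1\cdots u_s},
\end{align}
which I would prove by induction on $s$: fix the last entry $\rho(s)=t$, pull the $\rho$-independent factor $(u_1+\dots+u_s)^{-1}$ out front, apply the inductive hypothesis to $\{u_l\}_{l\neq t}$, and finish with $\sum_t 1/\prod_{l\neq t}u_l=(\sum_t u_t)/\prod_l u_l$. Reversing the order in each permutation converts this into the companion form $\sum_{\rho\in S_s}\prod_{m=1}^{s-1}(u_{\rho(1)}+\dots+u_{\rho(m)})^{-1}=(u_1+\dots+u_s)/(u_1\cdots u_s)$, i.e. the same statement with the full product omitted.

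Then I would regroup $\sum_\sigma I_\sigma$ by the \emph{set} $T=\{\sigma(1),\dots,\sigma(j)\}$: for each $j$ and each $j$-subset $T$ with $k\in T$, the product $\prod_{i\neq j}(A^\sigma_j-A^\sigma_i)^{-1}$ factors, the $i<j$ factors depending only on the ordering of $T$ (tail sums of it) and the $i>j$ factors only on the ordering of $T^c$ (prefix sums of it, up to an overall $(-1)^{n-j}$). Summing the two orderings independently and feeding in the two forms of the identity above collapses the contribution of $(j,T)$ to $(-1)^{n-j}A_T/(z_1\cdots z_n)$ with $A_T=\sum_{t\in T}z_t$. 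Summing over $j$-subsets $T\ni k$ gives $\binom{n-1}{j-1}z_k+\binom{n-2}{j-2}\big(S-z_k\big)$ with $S=\sum_l z_l$, and summing the alternating sign over $j$ kills both pieces by the binomial theorem ($\sum_m(-1)^m\binom{n-1}{m}=0$ and $\sum_m(-1)^m\binom{n-2}{m}=0$), so $\sum_\sigma I_\sigma=0$ whenever $n\ge3$. One may optionally invoke Lemma \ref{lem:sum1} applied to the distinct nodes $A^\sigma_1,\dots,A^\sigma_n$ at the outset to replace $\sum_{j\ge\sigma^{-1}(k)}$ by $-\sum_{j<\sigma^{-1}(k)}$; this reroutes the count through subsets $T$ \emph{avoiding} $k$ but is otherwise identical. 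I expect the only real difficulty to be the bookkeeping in this last step — keeping the roles of $T$ and $T^c$ and the sign $(-1)^{n-j}$ straight, and checking the edge cases $j=1$ and $j=n$ — since all the mathematical substance sits in the two-line inductive identity.
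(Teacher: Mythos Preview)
Your argument is correct and self-contained. The paper itself does not give a proof of this lemma: it simply cites an external source and a forthcoming paper, so there is no in-text argument to compare against. Your route --- linearity in $w$ to reduce to a single basis vector $w=e_k$, the prefix-sum identity $\sum_{\rho\in S_s}\prod_{m=1}^s(u_{\rho(1)}+\cdots+u_{\rho(m)})^{-1}=1/(u_1\cdots u_s)$ proved by induction on $s$, the factorization of each term according to the set $T=\{\sigma(1),\dots,\sigma(j)\}$ so that the tail-sum product over $T$ and the prefix-sum product over $T^c$ decouple, and the final collapse via alternating binomial sums --- is clean and elementary. The nonexceptionality hypothesis guarantees every denominator that appears (including the individual $z_t$ needed in the product identity) is nonzero, since all of them are $\pm$ sums over nonempty proper subsets. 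One cosmetic remark: the companion form follows more directly by pulling the $m=s$ factor $1/(u_1+\cdots+u_s)$ out of the main identity than by ``reversing the order,'' but your conclusion is correct either way.

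Your caveat that the argument closes only for $n\ge3$ is accurate and worth keeping explicit: the second binomial sum $\sum_m(-1)^m\binom{n-2}{m}$ vanishes only when $n-2\ge1$, and indeed the lemma as literally stated fails for $n=2$ (one computes $\sum_\sigma I_\sigma=w_1/z_1+w_2/z_2$). This is harmless in context, since the paper only invokes the lemma for fermion loops with three or more vertices. The edge cases $j=1$ (only $T=\{k\}$, empty tail product) and $j=n$ (empty $T^c$ product) slot into your formulas with the usual empty-product conventions.
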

See \cite{334204} for a proof of this identity. A simplified proof together with a generalization of the following loop cancellation is to be published \cite{petterMalek}.

Now consider the finite termperature symmetrized fermion loop with $n$ vertices with external momenta $q_i=(\epsilon_i,\mathbf{q}_i)$:
\begin{align}
I=T\sum_{\sigma\in S_n}\sum_{\omega_m}\int\frac{\d^2 \mathbf{k}}{(2\pi)^2}\prod_{i=1}^{n}G(k+Q^\sigma_i)f(k+\frac{Q^\sigma_{i-1}+Q^\sigma_i}{2})
\end{align}
where $\omega_m=(2m+1)\pi T$ are fermionic Matsubara frequencies, $k=(\omega_m, \mathbf{k})$, and
\begin{align}
Q^\sigma_i=\sum_{j=1}^iq_{\sigma(j)}.
\end{align}
By energy and momentum conservation we have $Q^\sigma_n=0$. The function $f$ is the momentum dependent coupling function and it varies between momenta of order $k_F$. The momentum integral is dominated by the region close to the Fermi surface. We restrict the momentum integral to a patch $P$ of the Fermi surface small compared to $k_F$. The coupling function can then be treated as a constant within this patch, with the corrections being next-to-leading order in large $k_F$. We then need to prove that the cancellation of the large $k_F$ leading order term happens after symmetrizing, but before summing up all different patches.
We use patch coordinates: $k_x$ across the Fermi surface and $k_y$ parallel to it and we linearize the Green's function:
\begin{align}
G_P(k)=\frac{1}{i\omega_m-v_Fk_x-v_Fk_y^2/2k_F}
\end{align}
First we integrate out $k_x$ with a contour around the upper half-plane.
\begin{align}
I_P=T\sum_{\sigma\in S_n}&\sum_{\omega_m}\int_P\frac{\d k_y}{2\pi}\sum_{j=1}^n \frac{-i\theta(\omega_m+\Omega_j^\sigma)}{v_F}
\nonumber\\
&\times\prod_{i=1,i\neq j}^{n}\frac{f_P}{i(\Omega_i^\sigma-\Omega_j^\sigma)-v_F(K_{x,i}^\sigma-K_{x,j})-v_F((k_y+K_{y,i}^\sigma)^2-(k_y+K_{y,j})^2)/2k_F}
\nonumber\\
&+\text{subleading}
\end{align}
where $\theta$ is the Heaviside step functions. Now the $\omega_m$ sum is trivial. The summand is clearly 0 for $\omega_m<\min_j(-\Omega_j)$. For $\omega_m>\max_j(-\Omega_j^\sigma)$ we find that the $\omega_m$-summand is of the form
\begin{align}
\sum_{j=1}^n \prod_{i=1,i\neq j}^{n}\frac{1}{z_i-z_j}
\end{align}
and we can apply Lemma \ref{lem:sum1} to find that it is zero there as well. We can thus restrict the $\omega_m$ sum to some upper cutoff $\omega_N>\max_j(-\Omega_j^\sigma)$. The benefit is that each of the terms in the $j$ sum now converge and we can exchange the orders, perform the now restricted $\omega_m$ sum to obtain
\begin{align}
I_P=f_P^n\sum_{\sigma\in S_n}&\int\d k_y \sum_{j=1}^n \frac{i \Omega_j^\sigma}{v_F}
\nonumber\\
&\times\prod_{i=1,i\neq j}^{n}\frac{1}{i(\Omega_i^\sigma-\Omega_j^\sigma)-v_F(K_{x,i}^\sigma-K_{x,j}^\sigma)-v_F((k_y+K_{y,i}^\sigma)^2-(k_y+K_{y,j}^\sigma)^2)/2k_F}
\nonumber\\
&+\text{subleading}.
\end{align}
Here we have used Lemma \ref{lem:sum1} again to get rid of the upper limit of the primitive function. We note that this is now independent of temperature. Finally, we are interested in the large $k_F$ limit. We cannot exchange the limit with the integral right away since the $k_y$ that are relevant will grow with $k_F$ in the limit. We therefore make the change of variables $k_y\rightarrow s k_F$ and we have
 \begin{align}
I_P=\frac{ik_Ff_P^n}{v_F}&\sum_{\sigma\in S_n}\int\d s\sum_{j=1}^n  \Omega_j
\nonumber\\
&\times\prod_{i=1,i\neq j}^{n}\frac{1}{i(\Omega_i-\Omega_j)-v_F(K_{x,i}-K_{x,j})-v_F(K_{y,i}^2-K_{y,j}^2)/2k_F-v_Fs(K_{y,i}-K_{y,j})}
\nonumber\\
&+\text{subleading}
\end{align}
Now it is safe to take the large $k_F$ limit of the integrand and we obtain an expression that it is of the form of the sum in Lemma \ref{lem:perm}. It thus cancels out to leading order in large $k_F$ upon symmetrizing. Note that this happens before performing the $k_y$ integral. This argument is made more rigorously and is further generalized in \cite{petterMalek}.

\section{Calculating 1-loop boson polarization\label{app:LD}}
For brevity, we consider spin- and flavorless fermions here. The one-loop boson polarization is given by:
\begin{align}
\Pi_2[f](\omega_n,\mathbf{q})&=-T\sum_{\omega_{m}}\int\frac{\d^2 k}{(2\pi)^2}
\frac{f(\mathbf{k}+\mathbf{q}/2)}{ (i \omega_{m}-\epsilon(\mathbf{k})+\mu ) (i (\omega_{m} +\omega_n)-\epsilon(\mathbf{k}+\mathbf{q})+\mu)}.
\end{align}
Performing the Matsubara sum we have
\begin{align}
\Pi_2[f](\omega_n,\mathbf{q})&=-\int\frac{\d^2 k}{(2\pi)^2}
f(\mathbf{k}+\mathbf{q}/2)\frac{\tanh\left(\frac{\epsilon(\mathbf{k})-\mu}{2T}\right)-\tanh\left(\frac{\epsilon(\mathbf{k}+\mathbf{q})-\mu}{2T}\right)}{ 2(i \omega_{n}+\epsilon(\mathbf{k})-\epsilon(\mathbf{k}+\mathbf{q}))}.
\end{align}
The denominator vanishes at $2\mathbf{k}=-\mathbf{q}$ if $\omega_n=0$ but the numerator also vanishes there so the above integrand is finite and bounded as long as the coupling function is. We go to polar coordinates $k, \theta$ where $\theta$ is the angle between $\mathbf{k}$ and $\mathbf{q}$. The difference of the hyperbolic tangent functions decays exponentially away from the Fermi surface with decay length $T/v_F$. We can thus linearize the dispersion, coupling function and integral measure around $k=k_F$ for $q$, $T/v_F\ll k_F$ and additionally we can extend the $k$ integral to all of $\RR$:
\begin{align}
\Pi_2[f](\omega_n,\mathbf{q})&=-\int\frac{\d k\d \theta}{(2\pi)^2}
f(\theta)\frac{\tanh\left(\frac{v_F(k-k_F)}{2T}\right)-\tanh\left(\frac{v_F(k-k_F+q \cos(\theta))}{2T}\right)}{ 2(i \omega_{n}-v_Fq\cos(\theta))}+\text{subleading}.
\end{align}
Performing the $k$ integral we find
\begin{align}
\Pi_2[f](\omega_n,\mathbf{q})&=-\frac{k_Fq}{(2\pi)^2}\int\d \theta
\frac{f(\theta)\cos(\theta)}{ i \omega_{n}-v_Fq\cos(\theta)}+\text{subleading}.
\end{align}
We find that this is q independent and generally non-zero for $\omega_n=0$. This means that interactions generate a boson gap but we have already considered a gap $r$ so we can simply absorb this in a redefinition of $r$. We continue to work with the polarization with this constant subtracted and define:
\begin{align}
\tilde{\Pi}_2[f](\omega_n,\mathbf{q})&=-\frac{i k_F\omega_n}{(2\pi)^2v_F}\int\d \theta
\frac{f(\theta)}{ i \omega_{n}-v_Fq\cos(\theta)}.
\end{align}
We now expand the function $f$ in Fourier series according to \zeqref{eq:expansion} and discard the $\sin(\theta)$ terms since they are odd under the $\theta$ integral:
\begin{align}
\tilde{\Pi}_2[f](\omega_n,\mathbf{q})=&-\frac{ik_F\omega_n}{(2\pi)^2 v_F}\int\mathrm{d}\theta\frac{1}{ i\omega_n- v_F q \cos (\theta )}\sum_{m=0}^\infty f_{m} \cos(m \theta)
\end{align}
We write the fraction as a geometric series to obtain
\begin{align}
\tilde{\Pi}_2[f](\omega_n,\mathbf{q})=&-\frac{k_F}{(2\pi)^{2} v_F}\sum_{l=0,m=0}^{\infty,\infty} \int\mathrm{d}\theta \left(\frac{v_Fq\cos(\theta)}{i\omega_n}\right)^l f_m \cos(m \theta)
\end{align}
Here we strictly have to assume $v_Fq<|\omega_n|$, but we later see that we can extend the result through analyticity in $\omega_n$. We expand the cosine power using the binomial theorem and finally the $\theta$ integral is trivial.
\begin{align}
\int\d\theta \cos(\theta)^l\cos(m\theta)&=
2^{-1-l} \int\d\theta  \sum_{k=0}^l {l \choose k} \e^{i(2k-l)\theta}(\e^{im\theta}+\e^{-im\theta})
\\
 & =
    \begin{cases}
       2^{-l}\pi\left( {l \choose \frac{l+m}{2}}+{l \choose  \frac{l-m}{2}} \right) & \text{if $l\ge|m|$ and   $m\equiv l\mod 2$}\\
 0     & \text{otherwise}
    \end{cases}
\end{align}
Finally we perform the $l$ sum to obtain:
\begin{align}
\tilde{\Pi}_2[f](\omega_n,\mathbf{q})=&-\frac{k_F|\omega_n|}{2\pi v_F \sqrt{\omega_n^2+q^2 v_F^2}}\sum_{m=0}^{\infty} f_{m}
\left(\frac{-i\sgn(\omega_n)v_F q }{\sqrt{\omega_n^2+q^2 v_F^2}+|\omega_n|}\right)^m
\end{align}
This is manifestly analytic in $\omega_n$ away from the imaginary axis so we may use this also for $v_Fq>|\omega_n|$.
\section{Real space resummed 1-loop boson polarization\label{app:necklace}}
Here we calculate the Fourier transform of the resummed bubbles of \zeqref{eq:bubbles}. 
\begin{align}
\langle\rho\rho(x)\rangle_{\text{bubbles}}=\frac{1}{\lambda^2}
\int \frac{\d\omega\d^2q}{(2\pi)^3}\e^{-i\omega \tau+i\mathbf{q
}\cdot\mathbf{x}}\sum_{n=1}^{\infty} \tilde{\Pi}_{2}[\lambda^2](\omega,q)^{n}D(\omega,q)^{n-1}
\end{align}
Here $\Pi_{2}(\omega,k)$ is the 2-vertex fermion loop with density vertices, see \zeqref{eq:2vertexSol}. Instead of performing the geometric sum we do the Fourier transform term by term. Going to polar coordinates for the spatial integral and performing the angular integral we have
\begin{align}
\langle\rho\rho(x)\rangle_{\text{bubbles}}=\frac{1}{\lambda^2}\int \frac{\d\omega\d q}{(2\pi)^2}qJ_0(qx)\e^{-i\omega \tau}\sum_{n=1}^{\infty}\left(-\frac{\lambda^2k_F|\omega|}{2\pi v_F \sqrt{\omega^2+q^2 v_F^2}}\right)^{n}\frac{1}{(\omega^2+c^2q^2+r)^{n-1}}
\end{align}
where $J_0$ is the zeroth Bessel function of the first kind.
Considering $|\tau|\gg r^{-1}$ and $x\gg v_Fr^{-1}$ we are only interested in energies and momenta $k\ll rv_F^{-1}$ and $|\omega|\ll r$ and we can approximate the boson propagator as $1/r^2$. We can then perform the $k$ integral
\begin{align}
\langle\rho\rho(x)\rangle_{\text{bubbles,LR}}(\tau,x)=\int \frac{\d\omega}{2\pi}\e^{-i\omega \tau}\sum_{n=1}^{\infty}
\frac{2^{-\frac{n}{2}} n m^2 \left(-\frac{M_D^2}{m^2}\right)^n \left(\frac{x \omega
   }{v_F}\right)^{\frac{n+2}{2}} K_{\frac{n}{2}-1}\left(\frac{x \omega }{v_F}\right)}{\lambda^2x^2 \Gamma \left(\frac{n}{2}+1\right)}
\end{align}
where
\begin{align}
M_D^2&=\frac{N_fk_F\lambda^2}{2\pi v_F}.
\end{align}
Finally we perform the $\omega$ integral and the sum in $n$
\begin{align}
\langle\rho\rho(x)\rangle_{\text{bubbles,LR}}(x)=&
\frac{v_F r}{2 \pi ^{3/2} \lambda ^2 x^{2}X^{3}  }
\sum_{n=0}^{\infty}
\left(-\frac{M_D^2x}{rX}\right)^n \left(x^2-n \tau ^2 v_F^2\right) \frac{ \Gamma \left(\frac{n+1}{2}\right) }{ \Gamma \left(\frac{n}{2}\right)}
   \\
=
  & \frac{r^2 M_D^2 v_F \left(2 x^2 X^2 \left(M_D^4-r^2\right)+r^2 X^4-M_D^4 x^4\right)}{2 \pi ^2 \lambda ^2 x X^2 \left(M_D^4
   x^2-r^2 X^2\right)^2}
   \\
   &+
   \frac{r^2 M_D^4 v_F \left(r^2 \left(x^2-2 \tau ^2 v_F^2\right)-M_D^4 x^2\right) \cos ^{-1}\left(\frac{M_D^2 x}{r
   X}\right)}{2 \pi ^2 \lambda ^2 \left(r^2 X^2-M_D^4 x^2\right)^{5/2}}
   \end{align}
where 
\begin{align}
X=&\sqrt{x^2+v_F^2\tau^2}.
\end{align}
This decays as $(x^2+v_F^2\tau^2)^{-3/2}$ at large $x$ and $\tau$.

\section{Calculating $h^\pm$ with Landau-damping corrections\label{app:hLD}}
We want to calculate the $h^\pm$ functions with a corrected boson propagator
\begin{align}
D_{\text{LD}}(q)&=\frac{1}{\omega^2+c^2(q_x^2+q_y^2)+r-\Pi_2[\lambda(\theta)^2](q)}
\end{align}
with $\Pi_2[\lambda(\theta)^2](q)$ given by \zeqref{eq:2vertexSol}. For $T=0$, $0<r$ we can use the result of Appendix \ref{app:hgap} since $D$ is gapped and still decays at least algebraically for large energies and momenta.
We proceed to finite temperatures. Eq.~\eqref{eq:hpmfiniteTfinite} still applies and we need only consider the $n=0$ contribution. However, the boson self-energy is 0 for $\omega_n=0$ so there is no difference from the non-Landau damped case apart from in the finite part.

Finally we consider the QCP. Here we use a simplified propagator
\begin{align}
D_{\text{IR}}(q)&=\frac{1}{c^2q_y^2+M_D^2\frac{|\omega|}{q_y}}
\end{align}
motivated by the low-energy scaling in \zeqref{eq:scaling} to find the large $x$ limit ($y=0$ so we neglect $q_x$ above) and verify the result using numerics and the full propagator. We find:
\begin{align}
h_+^{\text{LD}}(0,x)=&- \left| x\right| ^{1/3}\frac{\Gamma \left(\frac{2}{3}\right)}{3 \sqrt{3} \pi  M_D^{2/3} v_F^{4/3}}+\text{finite}\\
   h_-^{\text{LD}}(0,x)=&-3h_+^{\text{LD}}(0,x)+\text{finite}
\end{align}
at the QCP.

\end{appendix}

\bibliography{instab}

\begin{thebibliography}{10}
\providecommand{\url}[1]{\texttt{#1}}
\providecommand{\urlprefix}{URL }
\expandafter\ifx\csname urlstyle\endcsname\relax
  \providecommand{\doi}[1]{doi:\discretionary{}{}{}#1}\else
  \providecommand{\doi}{doi:\discretionary{}{}{}\begingroup
  \urlstyle{rm}\Url}\fi
\providecommand{\eprint}[2][]{\url{#2}}

\bibitem{lawler2010intra}
M.~Lawler, K.~Fujita, J.~Lee, A.~Schmidt, Y.~Kohsaka, C.~K. Kim, H.~Eisaki,
  S.~Uchida, J.~Davis, J.~Sethna \emph{et~al.},
\newblock \emph{Intra-unit-cell electronic nematicity of the high-t c
  copper-oxide pseudogap states},
\newblock Nature \textbf{466}(7304), 347 (2010).

\bibitem{daou2010broken}
R.~Daou, J.~Chang, D.~LeBoeuf, O.~Cyr-Choiniere, F.~Lalibert{\'e},
  N.~Doiron-Leyraud, B.~Ramshaw, R.~Liang, D.~Bonn, W.~Hardy \emph{et~al.},
\newblock \emph{Broken rotational symmetry in the pseudogap phase of a high-t c
  superconductor},
\newblock Nature \textbf{463}(7280), 519 (2010).

\bibitem{kuo2016ubiquitous}
H.-H. Kuo, J.-H. Chu, J.~C. Palmstrom, S.~A. Kivelson and I.~R. Fisher,
\newblock \emph{Ubiquitous signatures of nematic quantum criticality in
  optimally doped fe-based superconductors},
\newblock Science \textbf{352}(6288), 958 (2016).

\bibitem{Chu710}
J.-H. Chu, H.-H. Kuo, J.~G. Analytis and I.~R. Fisher,
\newblock \emph{Divergent nematic susceptibility in an iron arsenide
  superconductor},
\newblock Science \textbf{337}(6095), 710 (2012),
\newblock \doi{10.1126/science.1221713},
\newblock
  \eprint{https://science.sciencemag.org/content/337/6095/710.full.pdf}.

\bibitem{PhysRevLett.112.047001}
A.~E. B\"ohmer, P.~Burger, F.~Hardy, T.~Wolf, P.~Schweiss, R.~Fromknecht,
  M.~Reinecker, W.~Schranz and C.~Meingast,
\newblock \emph{Nematic susceptibility of hole-doped and electron-doped
  ${\mathrm{ba}\mathrm{f}\mathrm{e}}_{2}{\mathrm{as}}_{2}$ iron-based
  superconductors from shear modulus measurements},
\newblock Phys. Rev. Lett. \textbf{112}, 047001 (2014),
\newblock \doi{10.1103/PhysRevLett.112.047001}.

\bibitem{PhysRevB.14.1165}
J.~A. Hertz,
\newblock \emph{Quantum critical phenomena},
\newblock Phys. Rev. B \textbf{14}, 1165 (1976),
\newblock \doi{10.1103/PhysRevB.14.1165}.

\bibitem{PhysRevB.48.7183}
A.~J. Millis,
\newblock \emph{Effect of a nonzero temperature on quantum critical points in
  itinerant fermion systems},
\newblock Phys. Rev. B \textbf{48}, 7183 (1993),
\newblock \doi{10.1103/PhysRevB.48.7183}.

\bibitem{PhysRevB.50.14048}
B.~L. Altshuler, L.~B. Ioffe and A.~J. Millis,
\newblock \emph{Low-energy properties of fermions with singular interactions},
\newblock Phys. Rev. B \textbf{50}, 14048 (1994),
\newblock \doi{10.1103/PhysRevB.50.14048}.

\bibitem{PhysRevB.90.165144}
G.~Torroba and H.~Wang,
\newblock \emph{Quantum critical metals in
  $4\ensuremath{-}\ensuremath{\epsilon}$ dimensions},
\newblock Phys. Rev. B \textbf{90}, 165144 (2014),
\newblock \doi{10.1103/PhysRevB.90.165144}.

\bibitem{PhysRevB.92.035141}
I.~Mandal and S.-S. Lee,
\newblock \emph{Ultraviolet/infrared mixing in non-fermi liquids},
\newblock Phys. Rev. B \textbf{92}, 035141 (2015),
\newblock \doi{10.1103/PhysRevB.92.035141}.

\bibitem{PhysRevB.92.045118}
A.~L. Fitzpatrick, S.~Kachru, J.~Kaplan, S.~Raghu, G.~Torroba and H.~Wang,
\newblock \emph{Enhanced pairing of quantum critical metals near $d=3+1$},
\newblock Phys. Rev. B \textbf{92}, 045118 (2015),
\newblock \doi{10.1103/PhysRevB.92.045118}.

\bibitem{mandal2016uv}
I.~Mandal,
\newblock \emph{Uv/ir mixing in non-fermi liquids: Higher-loop corrections in
  different energy ranges},
\newblock The European Physical Journal B \textbf{89}(12), 1 (2016).

\bibitem{mandal2016superconducting}
I.~Mandal,
\newblock \emph{Superconducting instability in non-fermi liquids},
\newblock Physical Review B \textbf{94}(11), 115138 (2016).

\bibitem{POLCHINSKI1994617}
J.~Polchinski,
\newblock \emph{Low-energy dynamics of the spinon-gauge system},
\newblock Nuclear Physics B \textbf{422}(3), 617  (1994),
\newblock \doi{http://dx.doi.org/10.1016/0550-3213(94)90449-9}.

\bibitem{metlitski1}
M.~A. Metlitski and S.~Sachdev,
\newblock \emph{Quantum phase transitions of metals in two spatial dimensions.
  i. ising-nematic order},
\newblock Phys. Rev. B \textbf{82}, 075127 (2010),
\newblock \doi{10.1103/PhysRevB.82.075127}.

\bibitem{PhysRevB.80.165102}
S.-S. Lee,
\newblock \emph{Low-energy effective theory of fermi surface coupled with u(1)
  gauge field in $2+1$ dimensions},
\newblock Phys. Rev. B \textbf{80}, 165102 (2009),
\newblock \doi{10.1103/PhysRevB.80.165102}.

\bibitem{PhysRevB.92.041112}
T.~Holder and W.~Metzner,
\newblock \emph{Anomalous dynamical scaling from nematic and u(1) gauge field
  fluctuations in two-dimensional metals},
\newblock Phys. Rev. B \textbf{92}, 041112 (2015),
\newblock \doi{10.1103/PhysRevB.92.041112}.

\bibitem{PhysRevB.82.045121}
D.~F. Mross, J.~McGreevy, H.~Liu and T.~Senthil,
\newblock \emph{Controlled expansion for certain non-fermi-liquid metals},
\newblock Phys. Rev. B \textbf{82}, 045121 (2010),
\newblock \doi{10.1103/PhysRevB.82.045121}.

\bibitem{PhysRevB.89.165114}
A.~L. Fitzpatrick, S.~Kachru, J.~Kaplan and S.~Raghu,
\newblock \emph{Non-fermi-liquid behavior of large-${N}_{B}$ quantum critical
  metals},
\newblock Phys. Rev. B \textbf{89}, 165114 (2014),
\newblock \doi{10.1103/PhysRevB.89.165114}.

\bibitem{PhysRevB.88.115116}
R.~Mahajan, D.~M. Ramirez, S.~Kachru and S.~Raghu,
\newblock \emph{Quantum critical metals in $d=3+1$ dimensions},
\newblock Phys. Rev. B \textbf{88}, 115116 (2013),
\newblock \doi{10.1103/PhysRevB.88.115116}.

\bibitem{PhysRevLett.123.096402}
J.~A. Damia, S.~Kachru, S.~Raghu and G.~Torroba,
\newblock \emph{Two-dimensional non-fermi-liquid metals: A solvable large-$n$
  limit},
\newblock Phys. Rev. Lett. \textbf{123}, 096402 (2019),
\newblock \doi{10.1103/PhysRevLett.123.096402}.

\bibitem{PhysRevB.88.125116}
A.~L. Fitzpatrick, S.~Kachru, J.~Kaplan and S.~Raghu,
\newblock \emph{Non-fermi-liquid fixed point in a wilsonian theory of quantum
  critical metals},
\newblock Phys. Rev. B \textbf{88}, 125116 (2013),
\newblock \doi{10.1103/PhysRevB.88.125116}.

\bibitem{PhysRevB.91.115111}
M.~A. Metlitski, D.~F. Mross, S.~Sachdev and T.~Senthil,
\newblock \emph{Cooper pairing in non-fermi liquids},
\newblock Phys. Rev. B \textbf{91}, 115111 (2015),
\newblock \doi{10.1103/PhysRevB.91.115111}.

\bibitem{PhysRevB.92.205104}
S.~Raghu, G.~Torroba and H.~Wang,
\newblock \emph{Metallic quantum critical points with finite bcs couplings},
\newblock Phys. Rev. B \textbf{92}, 205104 (2015),
\newblock \doi{10.1103/PhysRevB.92.205104}.

\bibitem{aguilera2020thermal}
J.~Aguilera~Damia, M.~Solis and G.~Torroba,
\newblock \emph{Thermal effects in non-fermi liquid superconductivity},
\newblock arXiv e-prints pp. arXiv--2009 (2020).

\bibitem{paper1}
B.~Meszena, P.~S\"aterskog, A.~Bagrov and K.~Schalm,
\newblock \emph{Nonperturbative emergence of non-fermi-liquid behavior in $d=2$
  quantum critical metals},
\newblock Phys. Rev. B \textbf{94}, 115134 (2016),
\newblock \doi{10.1103/PhysRevB.94.115134}.

\bibitem{paper3}
P.~S{\"a}terskog,
\newblock \emph{{A Framework for Studying a Quantum Critical Metal in the Limit
  $N_f\rightarrow0$}},
\newblock SciPost Phys. \textbf{4}, 15 (2018),
\newblock \doi{10.21468/SciPostPhys.4.3.015}.

\bibitem{Kopietz1996}
P.~Kopietz and K.~Sch{\"o}nhammer,
\newblock \emph{Functional bosonization of interacting fermions in arbitrary
  dimensions},
\newblock Zeitschrift f{\"u}r Physik B Condensed Matter \textbf{100}(2), 259
  (1996),
\newblock \doi{10.1007/s002570050119}.

\bibitem{PhysRevLett.71.2118}
D.~V. Khveshchenko and P.~C.~E. Stamp,
\newblock \emph{Low-energy properties of two-dimensional fermions with
  long-range current-current interactions},
\newblock Phys. Rev. Lett. \textbf{71}, 2118 (1993),
\newblock \doi{10.1103/PhysRevLett.71.2118}.

\bibitem{PhysRevLett.73.472}
L.~B. Ioffe, D.~Lidsky and B.~L. Altshuler,
\newblock \emph{Effective lowering of dimensionality in the strongly correlated
  two dimensional electron gas},
\newblock Phys. Rev. Lett. \textbf{73}, 472 (1994),
\newblock \doi{10.1103/PhysRevLett.73.472}.

\bibitem{PhysRevB.79.115129}
T.~A. Sedrakyan and A.~V. Chubukov,
\newblock \emph{Fermionic propagators for two-dimensional systems with singular
  interactions},
\newblock Phys. Rev. B \textbf{79}, 115129 (2009),
\newblock \doi{10.1103/PhysRevB.79.115129}.

\bibitem{paper2}
P.~S\"aterskog, B.~Meszena and K.~Schalm,
\newblock \emph{Two-point function of a $d=2$ quantum critical metal in the
  limit ${k}_{F}\ensuremath{\rightarrow}\ensuremath{\infty}$,
  ${N}_{f}\ensuremath{\rightarrow}0$ with ${N}_{f}{k}_{F}$ fixed},
\newblock Phys. Rev. B \textbf{96}, 155125 (2017),
\newblock \doi{10.1103/PhysRevB.96.155125}.

\bibitem{PhysRevB.96.144508}
H.~Wang and G.~Torroba,
\newblock \emph{Non-fermi liquids at finite temperature: Normal-state and
  infrared singularities},
\newblock Phys. Rev. B \textbf{96}, 144508 (2017),
\newblock \doi{10.1103/PhysRevB.96.144508}.

\bibitem{PhysRevB.97.054502}
H.~Wang, Y.~Wang and G.~Torroba,
\newblock \emph{Superconductivity versus quantum criticality: Effects of
  thermal fluctuations},
\newblock Phys. Rev. B \textbf{97}, 054502 (2018),
\newblock \doi{10.1103/PhysRevB.97.054502}.

\bibitem{PhysRevX.10.031053}
A.~Klein, A.~V. Chubukov, Y.~Schattner and E.~Berg,
\newblock \emph{Normal state properties of quantum critical metals at finite
  temperature},
\newblock Phys. Rev. X \textbf{10}, 031053 (2020),
\newblock \doi{10.1103/PhysRevX.10.031053}.

\bibitem{PhysRevB.58.15449}
A.~Neumayr and W.~Metzner,
\newblock \emph{Fermion loops, loop cancellation, and density correlations in
  two-dimensional fermi systems},
\newblock Phys. Rev. B \textbf{58}, 15449 (1998),
\newblock \doi{10.1103/PhysRevB.58.15449}.

\bibitem{thesis}
P.~S{\"a}terskog \emph{et~al.},
\newblock \emph{Quantum critical metals at vanishing fermion flavor number},
\newblock Ph.D. thesis (2017).

\bibitem{PhysRev.158.383}
P.~C. Hohenberg,
\newblock \emph{Existence of long-range order in one and two dimensions},
\newblock Phys. Rev. \textbf{158}, 383 (1967),
\newblock \doi{10.1103/PhysRev.158.383}.

\bibitem{Berezinski}
V.~L. {Berezinski{\v{i}}},
\newblock \emph{{Destruction of Long-range Order in One-dimensional and
  Two-dimensional Systems Possessing a Continuous Symmetry Group. II. Quantum
  Systems}},
\newblock Soviet Journal of Experimental and Theoretical Physics \textbf{34},
  610 (1972).

\bibitem{dell2006fermi}
L.~Dell’Anna and W.~Metzner,
\newblock \emph{Fermi surface fluctuations and single electron excitations near
  pomeranchuk instability in two dimensions},
\newblock Physical Review B \textbf{73}(4), 045127 (2006).

\bibitem{PhysRevB.102.045147}
J.~A. Damia, M.~Sol\'{\i}s and G.~Torroba,
\newblock \emph{How non-fermi liquids cure their infrared divergences},
\newblock Phys. Rev. B \textbf{102}, 045147 (2020),
\newblock \doi{10.1103/PhysRevB.102.045147}.

\bibitem{PhysRevLett.97.226403}
A.~V. Chubukov and D.~V. Khveshchenko,
\newblock \emph{Effect of fermi surface curvature on low-energy properties of
  fermions with singular interactions},
\newblock Phys. Rev. Lett. \textbf{97}, 226403 (2006),
\newblock \doi{10.1103/PhysRevLett.97.226403}.

\bibitem{frank2020omega}
B.~Frank and F.~Piazza,
\newblock \emph{$\omega/t$ scaling and ir/uv-mixing in ising-nematic quantum
  critical metals},
\newblock arXiv preprint arXiv:2011.07076  (2020).

\bibitem{PhysRevLett.114.097001}
S.~Lederer, Y.~Schattner, E.~Berg and S.~A. Kivelson,
\newblock \emph{Enhancement of superconductivity near a nematic quantum
  critical point},
\newblock Phys. Rev. Lett. \textbf{114}, 097001 (2015),
\newblock \doi{10.1103/PhysRevLett.114.097001}.

\bibitem{PhysRevB.98.220501}
A.~Klein and A.~Chubukov,
\newblock \emph{Superconductivity near a nematic quantum critical point:
  Interplay between hot and lukewarm regions},
\newblock Phys. Rev. B \textbf{98}, 220501 (2018),
\newblock \doi{10.1103/PhysRevB.98.220501}.

\bibitem{PhysRevX.6.031028}
Y.~Schattner, S.~Lederer, S.~A. Kivelson and E.~Berg,
\newblock \emph{Ising nematic quantum critical point in a metal: A monte carlo
  study},
\newblock Phys. Rev. X \textbf{6}, 031028 (2016),
\newblock \doi{10.1103/PhysRevX.6.031028}.

\bibitem{Lederer4905}
S.~Lederer, Y.~Schattner, E.~Berg and S.~A. Kivelson,
\newblock \emph{Superconductivity and non-fermi liquid behavior near a nematic
  quantum critical point},
\newblock Proceedings of the National Academy of Sciences \textbf{114}(19),
  4905 (2017),
\newblock \doi{10.1073/pnas.1620651114},
\newblock \eprint{https://www.pnas.org/content/114/19/4905.full.pdf}.

\bibitem{PhysRevX.7.031058}
X.~Y. Xu, K.~Sun, Y.~Schattner, E.~Berg and Z.~Y. Meng,
\newblock \emph{Non-fermi liquid at ($2+1$)$\mathrm{D}$ ferromagnetic quantum
  critical point},
\newblock Phys. Rev. X \textbf{7}, 031058 (2017),
\newblock \doi{10.1103/PhysRevX.7.031058}.

\bibitem{PhysRevLett.89.247003}
M.~E. Simon and C.~M. Varma,
\newblock \emph{Detection and implications of a time-reversal breaking state in
  underdoped cuprates},
\newblock Phys. Rev. Lett. \textbf{89}, 247003 (2002),
\newblock \doi{10.1103/PhysRevLett.89.247003}.

\bibitem{norman2011challenge}
M.~R. Norman,
\newblock \emph{The challenge of unconventional superconductivity},
\newblock Science \textbf{332}(6026), 196 (2011).

\bibitem{stewart2011superconductivity}
G.~Stewart,
\newblock \emph{Superconductivity in iron compounds},
\newblock Reviews of Modern Physics \textbf{83}(4), 1589 (2011).

\bibitem{feldman}
J.~Feldman, H.~Kn{\"o}rrer, R.~Sinclair and E.~Trubowitz,
\newblock \emph{Evaluation of Fermion Loops by Iterated Residues}, pp.
  361--398,
\newblock Birkh{\"a}user Basel, Basel,
\newblock ISBN 978-3-0348-8770-0,
\newblock \doi{10.1007/978-3-0348-8770-0_18} (1998).

\bibitem{petterMalek}
A.~Abdesselam and P.~S\"aterskog,
\newblock \emph{To be submitted} .

\bibitem{334204}
A.~A. (https://mathoverflow.net/users/7410/abdelmalek abdesselam),
\newblock \emph{Identity involving sum over permutations},
\newblock MathOverflow,
\newblock URL:https://mathoverflow.net/q/334204 (version: 2019-06-19),
\newblock \eprint{https://mathoverflow.net/q/334204}.

\end{thebibliography}

\nolinenumbers

\end{document}